\newcommand{\scaffolding}[1]{\ensuremath{\textit{sc-}#1}}
\newcommand{\filling}[1]{\ensuremath{\textit{fl-}#1}}
\newcommand{\C} {{\ensuremath{\textrm{COM-SLIP}^{\cup,\cdot}\;}}}
\newcommand{\shuffle}{\,\rotatebox[origin=c]{-90}{$\exists$}\, }
\theoremstyle{plain}
\newtheorem{proposition}{Proposition}
\newtheorem{theorem}{Theorem}
\newtheorem{lemma}{Lemma}
\theoremstyle{definition}
\newtheorem{definition}{Definition}
\newtheorem{example}{Example}
\title{Commutative  Languages and their  Composition by Consensual Methods\footnote{Work partially supported by \emph{PRIN 2010LYA9RH-006} ``Automi e linguaggi formali: Aspetti Matematici e Applicativi''.}\footnote{The main results have been announced in~\cite{conf/ictcs/Crespi-Reghizzi13}, with preliminary  sketchy  proofs entirely superseded by the present ones.}}
\author{Stefano {Crespi Reghizzi}
\and 
Pierluigi {San Pietro}
\institute{DEIB, Politecnico di Milano and CNR-IEIIT}
\email{stefano.crespireghizzi@polimi.it \quad pierluigi.sanpietro@polimi.it}
}
\begin{document}
\maketitle              

\begin{abstract}
Commutative  languages with the semilinear property (SLIP)  can be naturally recognized by real-time NLOG-SPACE multi-counter machines. We show that unions and concatenations of such languages can be  similarly recognized, relying on -- and further developing, our recent results on the family of consensually regular (CREG) languages.  A CREG language is defined by a regular language on the alphabet that includes the terminal alphabet and its marked copy. New  conditions, for ensuring that the union or concatenation of CREG languages is  closed,  are presented and applied to the commutative SLIP languages. The paper contributes to the knowledge of the CREG family, and introduces novel   techniques for language composition, based on arithmetic congruences that act as language signatures. Open problems are listed.

\end{abstract}

\section{Introduction}\label{sectIntroduction}
This paper focuses on commutative languages having the semilinear property (SLIP). We recall that a  language has the \emph{linear property} (LIP)  if, in any word,  the number of letter occurrences (also named Parikh image)  satisfies a linear equation; 
it has  the \emph{semilinear property} (SLIP)~\cite{Ginsburgh1966} if the number satisfies one out of finitely many linear equations. 
A language is \emph{commutative} (COM) if, for every word, all permutations
are in the language; 
thus, the legality of a word is based only on the Parikh image, not on the  positions of the letters.  
Here we deal with the subclass of COM languages enjoying the SLIP, denoted by COM-SLIP, for which we recall some known properties.
For a  binary alphabet, COM-SLIP languages are context-free 
whereas, in the general  case, they can be recognized by \emph{multi-counter machines} (MCM), in particular by non-deterministic quasi-real-time \emph{blind} MCM (equivalent to \emph{reversal-bounded} MCM~\cite{DBLP:journals/tcs/Greibach78a}). 
The COM-SLIP family is closed under all Boolean operations, homomorphism and inverse  homomorphism,  
but it is  not closed under concatenation. 
\par
Our  contribution  is to relate two seemingly disparate language families:  on one hand, the  COM-SLIP languages and their closure under union and concatenation (denoted by \C), on the other hand, the family of \emph{consensually regular} languages (CREG), recently introduced by the authors, to be later presented. 
We  briefly explain the intuition behind it. 
 Given a terminal alphabet, a CREG language  is specified by means of a regular language (the \emph{base})  
having a \emph{double} alphabet: the original one and a \emph{dotted} copy. Two or more words in the base language  \emph{match}, if they 
are all identical when the dots are disregarded and, in every position, exactly one word 
 has an undotted letter  (thus in all remaining words the same position is dotted). 
In our metaphor, we say that, position by position, one of the base words ``places''  a letter and the remaining words ``consent'' to it. 
A word is in the consensual language if the base language contains a set of matching words, identical to the given word when the dots are disregarded. 
This mechanism somewhat resembles the model of alternating non-deterministic finite automata, 
 but the criterion by which the parallel computations match is  more flexible and produces a recognition device which is a MCM working in NLOG-SPACE. 
This MCM can be viewed as a token or multi-set machine; it has one counter for each state of the DFA recognizing the base language;  each counter value  counts the number of  parallel threads that are currently active in each state.
 Our main result is that the \C family  is strictly included in CREG;
we also prove some non-closure properties of  \C.  
\par
To construct the regular language that serves as base  for the consensual definition of a \C \;  language,  we have devised a new method,  which may be also useful to study the inclusion in consensual classes of  other families closed union or concatenation.  
It is easy to consensually specify a COM-LIP language by means of a regular base; 
however, in general, union or concatenation of two regular bases  consensually specifies a larger language than the union or concatenation of the components.  
To prevent this to happen, we assign a distinct numeric congruence class to each base, which determines the positions where a letter may be placed  as dotted or as undotted. For a given word, such positions are not the letter orders, but they are the orders of the letters in the projections of the word on each letter of the alphabet. The congruence acts as a sort of signature that cannot be mismatched with other signatures. 
\par
To hint to a potential  application,  \C offers a
rather suitable  schema for certain  parallel computation systems, such as Valiant's
``bulk synchronous parallel computer''~\cite{ValiantLes1990a}. 
There, when all threads in a parallel computational phase, which we suggest to
model by a commutative language, terminate, the next phase can start; the
sequential composition of such phases can be represented by language concatenation;
and the composition of alternative subsystems can be modeled by  language  union. As said, such computation schema is not finite-state but it is a MCM.
\par
Paper organization: Sect.~\ref{SectionFirstDef} contains preliminaries, some simple properties of \C  and the consensual model. 
Sect.~\ref{SectDecomposedBases} introduces the  decomposed form, states and proves the conditions that ensure  union- and concatenation-closure, and details the  congruence based constructions. 
Sect.~\ref{SectDisciplined} proves the main result through  a series of lemmas. 
 The last section refers to related work and mentions some unanswered  questions. 
 
\section{Preliminary Definitions and Properties}\label{SectionFirstDef}\label{SectionPreliminaries}
 The terminal alphabet is denoted by $\Sigma = \{a_1, \ldots, a_k \}$,   the  empty word by $\epsilon$  and $|x|$ is the length of a word $x$.
The projection of $x$ on  $\Delta \subseteq \Sigma$ is denoted by
$\pi_\Delta\left( x \right)$; $|x|_a$ is shorthand for $|\pi_{\{a\}}\left( x \right)|$ for $a\in\Sigma$, and $|x|_\Delta$ stands for $|\pi_\Delta\left( x \right)|$.
 The $i$-th letter of $x$ is  $x(i)$ and $x(i,j)$ is the substring $x(i)\ldots x(j)$, $1\leq i \leq j \leq |x|$.  
 The \emph{shuffle} operation is denoted  by $\shuffle$. 
\par\noindent
The \emph{Parikh} \emph{image} or \emph{vector} of a word $x\in \Sigma^*$  is 
$\Psi(x)=\left[|x|_{a_1},\,\ldots,\,|x|_{a_k}\right]$; it can be
naturally  extended to a language. The component-wise addition of two vectors is denoted by $\vec{p'} + \vec{p''}$.
The \emph{commutative closure} of  $L\in \Sigma^*$ is  $com(L) = \{ x\in \Sigma^* \mid \Psi(x) \in \Psi(L) \}$.
 A language $L$  is \emph{commutative} if $com(L)=L$;
the
corresponding language family is named COM. 
A language $L\subseteq \Sigma^*$  has the \emph{linear property} (LIP) if there exist
$q+1>0$ vectors $\vec{c}, \vec{p}^{(1)}, \ldots,
\vec{p}^{(q)}$ over $\mathbb{N}^k$, (resp. the \emph{constant} and the \emph{periods})  
such
that  $\Psi(L) = \left\{ \vec{c} +n_1 \cdot\vec{p}^{(1)}+\ldots +  n_q \cdot
\vec{p}^{(q)} \mid n_1, \ldots, n_q \geq 0 \right\}$.
\\
A language has the \emph{semilinear property} (SLIP) if it is the finite  union
of  LIP languages.
The families of commutative LIP/SLIP languages are denoted by
\emph{COM-LIP}/ \emph{COM-SLIP}, respectively. 
It is  well known that COM-SLIP is
closed under the Boolean operations, inverse  homomorphism, homomorphism and Kleene star,  but  not under  concatenation, which in general destroys commutativity.
However, the concatenation of  COM-SLIP languages
still enjoys the SLIP. 
\par\noindent
Let $\C$ be the smallest family including COM-SLIP languages and closed under union and concatenation. 
 Let BLIND denote the class of languages accepted by nondeterministic, blind 
multicounter machines~\cite{DBLP:journals/tcs/Greibach78a}, which, we recall, are restricted to perform a test for zero only at the end of a computation; they are equivalent to reversal-bounded counter machines.
The following facts, although to our knowledge not stated in the literature, are straightforward.
\begin{proposition}\label{proposPropertyCATCOMSLIP} {\em Main Properties of \C.}
\begin{enumerate}
	\item Every \C language on a binary alphabet is context-free.
	\item \C $\subsetneq$ BLIND.
	\item The \C family is not closed  under  intersection 
and Kleene star.
\end{enumerate}
\end{proposition}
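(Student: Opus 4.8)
The plan is to treat the three items in order, the first two being essentially immediate and the third carrying all the weight. For item~1, recall from the Introduction that over a binary alphabet every COM-SLIP language is context-free, and that the context-free languages are closed under union and concatenation. Since a \C language contained in $\{a,b\}^*$ is a finite union of concatenations $C_1\cdots C_r$ of COM-SLIP languages, and each nonempty such concatenation forces every $C_i$ to be a subset of $\{a,b\}^*$, a one-line structural induction gives context-freeness.

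For item~2, I would first establish $\C\subseteq\textrm{BLIND}$. A COM-SLIP language lies in BLIND: a blind multicounter machine accumulates each $|x|_{a_i}$ in its own counter while reading $x$, then nondeterministically guesses the coefficients of one of the defining linear sets and subtracts the corresponding vector $\vec{c}+\sum_h n_h\vec{p}^{(h)}$, accepting iff all counters are then zero (the counters only increase, then only decrease, so blindness costs nothing here). BLIND is closed under union (guess which machine to run) and under concatenation (run the first machine on a guessed prefix, freeze its counters once it reaches a final state, run the second machine on the remaining suffix with fresh counters, and test \emph{all} counters for zero only at the end); alternatively one invokes the known closure of reversal-bounded counter languages under these operations. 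Hence $\C\subseteq\textrm{BLIND}$, and strictness then comes for free from item~3, since $(ab)^*$ is regular, hence in BLIND, but (as shown below) not in \C.

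For item~3 the crucial observation, and the only place where real work is needed, is the following: if $K=C_1\cdots C_r$ with each $C_i$ COM-SLIP (hence commutative) and $K$ is contained in a bounded language $w_1^*\cdots w_p^*$ over pairwise distinct letters, then in every factorization $w=u_1\cdots u_r$ of a word $w\in K$ with $u_i\in C_i$, each $u_i$ is a power of a single letter. Indeed, otherwise commutativity of $C_i$ lets us replace $u_i$ inside $w$ by a rearrangement of its own letters that breaks the prescribed order $w_1^*\cdots w_p^*$, contradicting $K\subseteq w_1^*\cdots w_p^*$. Granting this, one classifies factorizations by the ``role'' function $\rho$ sending each index $i$ to the letter (or $\epsilon$) of which $u_i$ is a power: there are finitely many such $\rho$, each is monotone in the order of the $w_\ell$, and for a fixed term $K_j$ and fixed $\rho$ the set of words of $K_j$ realizing $\rho$ is exactly $\{w_1^{n_1}\cdots w_p^{n_p}\mid n_\ell\in M_\ell\}$, where each exponent set $M_\ell$ is a Minkowski sum of one-dimensional sections of the semilinear sets $\Psi(C_i)$ and is therefore ultimately periodic. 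Now if $K=\bigcup_j K_j$ is forced to be a ``diagonal'' language such as $\{a^nb^n\mid n\geq 0\}$, each of these product sets must lie inside the diagonal, which forces every nonempty $M_\ell$ to be one and the same single value; since there are only finitely many pairs $(j,\rho)$, the language would be finite -- a contradiction. Applying this: $\{w\in\{a,b\}^*\mid |w|_a=|w|_b\}$ is COM-SLIP and $a^*b^*=a^*\cdot b^*\in\C$, yet their intersection is $\{a^nb^n\mid n\geq 0\}\notin\C$, so \C is not closed under intersection; and $\{ab\}=\{a\}\cdot\{b\}\in\C$ (each factor a singleton over a single letter), yet $\{ab\}^*=(ab)^*\notin\C$ -- any factor of $(ab)^n$ of length at least $3$, once rearranged, produces $aa$ or $bb$, so all factors have length $\leq 2$ and each term of the union is finite -- so \C is not closed under Kleene star.

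The main obstacle is exactly this lemma together with the surrounding bookkeeping: one must handle empty factors (which may sit anywhere), verify that ``rearranging a single factor'' genuinely stays inside $K$ (it does, because it uses only the commutativity of the one language $C_i$), and use that one-dimensional sections and Minkowski sums of semilinear subsets of $\N$ are ultimately periodic, so that containment in a diagonal really collapses all the exponent sets to a common singleton.
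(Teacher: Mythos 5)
Your proof is correct, and on the one item that carries real content it takes a genuinely different route from the paper. Items 1 and 2 (the inclusion $\C\subseteq\textrm{BLIND}$) essentially coincide with the paper's argument; for strictness in item 2 the paper instead notes that BLIND contains non-context-free binary languages while by item 1 \C does not, whereas you derive strictness from $(ab)^*\notin\C$, which works but makes item 2 depend on item 3. The divergence is in item 3. The paper's intersection argument is a bootstrap: assuming closure, $\{a^nb^n\mid n>0\}$, hence $\{a^+b^na^n\}$ and $\{a^mb^ma^+\}$, hence their intersection $\{a^nb^na^n\}$ would be in \C, contradicting item 1 since the last language is not context-free; for Kleene star the paper intersects $(com((ab)^+)\,c)^*$ with $(a^+b^+c)^*$ inside BLIND and invokes the known result of Baker/Greibach that $\{a^nb^n\}^*$ is not BLIND. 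Your argument is instead self-contained: the lemma that a concatenation of commutative languages lying inside $w_1^*\cdots w_p^*$ over distinct letters must factor into single-letter powers, so that each term of a \C expression has a Cartesian-product exponent structure which a diagonal constraint collapses to finiteness. This buys the strictly stronger facts that $\{a^nb^n\}$ and $(ab)^*$ are themselves not in \C (the paper's contradictions only refute the closure assumptions, not membership of these particular languages) and avoids the external non-closure citations; the paper's route is shorter and reuses item 1. One point to make explicit in a write-up: your bounded-language lemma does not apply verbatim to $(ab)^*$, since that language is not of the form $w_1^*\cdots w_p^*$; the separate ``every factor has length at most $2$'' rearrangement argument is what carries that case, and it is correct, but it is a parallel argument rather than an instance of the lemma.
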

\begin{proof}
Let $L' = com\left((ab)^+\right)$.
Statement~(1) is immediate: since all COM-SLIP on a a binary alphabet are context-free~\cite{Latteux79,Rigo:2003}, also their union and concatenation is context-free. 
Statement~(2) is also immediate, since COM-SLIP is clearly included in BLIND, and 
BLIND is closed by union and concatenation. The inclusion is strict since BLIND includes also 
non-context-free languages on a binary alphabet~\cite{DBLP:journals/tcs/Greibach78a}. 
To prove non-closure of intersection -- Statement~(3) -- assume by contradiction that  the language $L_0 = L'\cap a^+ b^+  =\{a^n b^n\mid n>0\}$ is in \C. 
Hence, also the languages 
$L_1=\{a^+ b^n a^n \mid n>0\}$, $L_2 =\{a^m b^m a^+ \mid  m>0\}$ and $L_1 \cap L_2 = \{a^n b^n a^n \mid  n>0\}$ are in \C. 
But the latter language is  not context-free, contradicting Statement~(1).  
To complete the proof of Statement~(3), if \C were closed under Kleene star, then language $L_3 =(L' c)^*$ would be \C, with $c \not\in\{a,b\}$. 
However, \C is included in BLIND, which is an intersection-closed full semiAFL (see Section~5 of~\cite{Baker1974} and also Theorem~1 of~\cite{DBLP:journals/tcs/Greibach78a}), 
i.e., BLIND is closed under intersection, union, arbitrary homomorphism, 
inverse homomorphism, and intersection with regular languages. 
Hence, the language $L_4= L_3\cap (a^+b^+c)^* = \{a^n b^n c \mid n>0 \}^*$ would be in BLIND. 
Letter $c$ can be deleted by a homomorphism, hence also
the language $\{a^n b^n \mid n>0 \}^*$, is BLIND, contradicting Corollary~3 of~\cite{Baker1974} and also Theorem~6, Part (2), of~\cite{DBLP:journals/tcs/Greibach78a}.  
\end{proof}

\subsection{Consensual Languages.}\label{sec-consensual}
We present the necessary elements of consensual language theory 
\cite{journals/ita/Crespi-ReghizziP11,CrespiSanPietroCIAA2012}. 
Let $\mathring{\Sigma}$ be the {\em dotted} (or marked) copy of alphabet
  $\Sigma$.  For
each $a\in \Sigma$,   $\tilde{a}$ denotes the set $\{a, \mathring{a}\}$. 
The  alphabet $\widetilde{\Sigma}=\Sigma\cup\mathring{\Sigma}$ is named
 {\em double} (or internal).
To express a sort of agreement  between words over the double
alphabet, we  introduce a binary relation, called \emph{match}, over
$\widetilde{\Sigma}^*$. 
\begin{definition}[Match]\label{matchOperation}
The partial, symmetrical, and associative binary operator, called \textit{match},
$@:\widetilde\Sigma \times \widetilde\Sigma \rightarrow \widetilde\Sigma
$
is defined as follows, for all $a\in\Sigma$:
\[
\left\{%
\begin{array}{ll}
    a @ \mathring{a}=\mathring{a}@ a = a  & \\
    \mathring{a}@ \mathring{a}= \mathring{a} & \\
     \text{undefined} & \hbox{\text{in every other case}.} \\
\end{array}%
\right.    
\]
The  match is naturally extended to strings of equal length, as a letter-by-letter application, by assuming $\epsilon @ \epsilon = \epsilon$: for every $n>1$, for all $w, w'\in
\widetilde\Sigma^n$, if $w(i)@w'(i)$ is defined for every $i, 1\le i \le n$, then 
\[
w\; @\; w' = \left(w(1) @w'(1)\right) \cdot \ldots \cdot \left(w(n) @w'(n)\right).\quad \text{In every other case, $w@w'$ is undefined.}
\]
\end{definition}
\noindent Hence, the match is undefined on strings $w,w'$ of unequal lengths,  or else if there exists a position $j$ such that  $w(j) @ w'(j)$ is  undefined, which  occurs in three cases: when both
 characters are in $\Sigma$, when both are in $\mathring{\Sigma}$  and differ, and when either one is dotted but is not the dotted copy of the  other. 
Syntactically, the precedence of the  match operator is just under the precedence of the concatenation. 
 The match $w$ of two or more strings is further qualified as  {\em strong}  if $w\in\Sigma^*$, or
as 
{\em weak} otherwise. 
By Def.~\ref{matchOperation}, 
if $w=w_1 @ w_2 @ \ldots @ w_m$  is a strong match of $m\ge 1$ words $w_1, \dots, w_m$, then in each position $1\leq i\leq |w|$, exactly one
word,
say $w_h$, is undotted, i.e., $w_h(i)\in \Sigma$, and $w_j(i)\in
\mathring{\Sigma}$ for all $j\neq h$; we say that word $w_h$ {\em places}
the letter at position $i$ and the other words {\em consent} to it.
Metaphorically, the words that strongly match 
provide mutual consensus on the validity of the corresponding  word over
$\Sigma$, thereby motivating the name ``consensual'' of the language family. 
\\
The match  is extended to 
two languages $B', B''$ on the double alphabet,  as $ 
B'\, @\, B'' = \{w'\, @\, w''\mid w' \in B', w'' \in B''\}
$. The iterated match $B^{i@}$ is defined for all $i\ge 0$, as $B^{0@}=B$, $B^{i@}=B^{(i-1)@}@B$, if $i>0$. 
\begin{definition}[Consensual
language]\label{closureUnderMatch}\label{defConsensualLang}
The \emph{closure under match}, or $@$-\emph{closure}, of a language $B\subseteq
\widetilde\Sigma^*$ is \; 
$
B^@ = \bigcup_{i\ge0} B^{i@}
$. 
 The \emph{consensual language with base} $B$ is defined as \;
$
{\mathcal C}(B) = B^@ \cap \Sigma^*.
$ 
The family of \emph{consensually regular} languages, denoted by  CREG, 
is the collection of
all languages ${\mathcal C}(B)$, such that the base $B$ is regular.    
 \end{definition}
It follows that a CREG  language can be \emph{consensually specified} by a regular expression over $\widetilde{\Sigma}$.
\begin{example}
The  LIP language $L =\{a^n b^n c^n\mid n>0\}$ is consensually specified 
by the base (that we may call a ``consensual regular expression'')
$
\mathring{a}^* a\, \mathring{a}^* \mathring{b}^* b\, \mathring{b}^*\mathring{c}^*c\, \mathring{c}^*
$.
For instance, $aabbcc$ is the (strong) match of 
$\mathring{a}\, a\, \mathring{b}\,   b\, \mathring{c}\  c$ and 
$a\, \mathring{a} \,   b\ \mathring{b}\, c\, \mathring{c}$. 
The commutative closure of $L$ is also in CREG, with base: 
$com\big(a b c\big) \shuffle \mathring{\Sigma}^*$.
\\
 Similarly, the COM-LIP language $L'=com\big((ab)^+\big)=\mathcal{C}(B_1)$, where
$
B_1= com\big(ab\big) $ $ \shuffle \mathring{\Sigma}^* 
$. 
\\
The  COM-LIP language  
$L''=com\big((abb)^+\big)$
is specified  by the base 
$
B_2=com\big( a bb\big) \shuffle \mathring{\Sigma}^* 
$.
\\
The languages $L'\cup L''$  and $L'\cdot  L''$ are  in CREG, but, counter to a naive intuition, they are not specified  by the  bases obtained by composition, respectively, $B_1 \cup B_2$ and $B_1 \, B_2$.
In general $\mathcal{C}(B_1 \cup B_2)\supset \mathcal{C}(B_1)  \cup \mathcal{C}(B_2)$: in the examples,  $\mathcal{C}(B_1 \cup B_2)$ contains also undesirable ``cross-matching'' 
words, such as $ab abb = ab\mathring a\mathring b\mathring b\, @\, \mathring a \mathring b a bb$. 
A systematic compositional technique for obtaining the correct bases for the union and concatenation is the main contribution of this paper. 
\end{example}
\paragraph{Summary of known and relevant  CREG properties.}
{\em Language family comparisons}:  CREG includes the regular
languages, is incomparable with the context-free and deterministic context-free families, is  included within the context-sensitive family,  and it contains
non-SLIP  languages. CREG strictly includes the family of languages accepted by partially-blind multi-counter machines that are deterministic and quasi-real-time, as well as their union~\cite{CrespiSanPietroCIAA13}. 
\\{\em Closure properties:} CREG is is closed under
 marked concatenation, marked iteration, inverse alphabetic homomorphism, reversal,  and intersection and union with regular languages. 
The marked concatenation of two languages $L_1,L_2\subseteq \Sigma^*$ is the language $L_1 \# L_2$, where $\#\not\in\Sigma$, while the marked iteration of 
$L\subseteq \Sigma^*$ is the language $(L \#)^*$. A language family enjoying such properties is known as  a \emph{pre-Abstract Family of Languages} (see, e.g.,~\cite{Salo73}).
A precise characterization of the bases that consensually specify regular languages is in 
\cite{CrespiSanPietroCIAA2012}; an analysis of the reduction in descriptional complexity of the consensual base with respect to the specified regular language is in~\cite{journals/ita/Crespi-ReghizziP11}. 
\\{\em Complexity: }CREG is in NLOGSPACE, i.e., NSPACE$(\log n)$ (often called NL): it can be recognized by a nondeterministic multitape Turing machine working in $\log n$ space. 
The recognizer of CREG languages is a special kind of nondeterministic, real-time multi-counter machine.
\paragraph{Useful notations for consensual languages.}
The following mappings will be used:
\newcommand{\sw}{\textit{switch}}
\begin{center}
$
\begin{array}{ll}
\text{switching }       & \sw: \widetilde\Sigma \to \widetilde \Sigma \text{ where } 
\sw(a) = \mathring a$, $\sw(\mathring a) = a, \text{ for all } a \in \Sigma
\\
\text{marking } &\textit{dot} \,: \widetilde{\Sigma} \to \mathring{\Sigma} \text{ where }
\textit{dot}(x) = x , \text{ if } x \in \mathring\Sigma , 
\text{ and } \textit{dot}(x) =\mathring{a}, \text{ if } x =a\in \Sigma
\\
\text{unmarking } & \textit{undot} \,: \widetilde{\Sigma} \to \Sigma \text{ where } \textit{undot}(a) = \sw(\textit{dot}(a)), \text{ for all } a \in \Sigma.
\end{array}
$
\end{center}
These mappings are naturally  extended to words and languages, e.g., given $x \in\widetilde{\Sigma}^*$, $\sw(x)$ is the word obtained  interchanging 
 $a$ and $\mathring a$ in $x$ (a sort of ``complement''). 
\par
In the remainder of the paper, we assume that each base language is a subset of  $\tilde{\Sigma}^* -\mathring\Sigma^+$, since words in $\mathring\Sigma^+$ are clearly useless in a match. 
\noindent Let $B$, $B'$ be languages included in $\tilde{\Sigma}^+- \mathring\Sigma^+$. We say that $B$ is \emph{unproductive} if ${\mathcal C}(B)= \emptyset$, and that the pair $(B, B')$ is \emph{unmatchable} if 
$B @ B' =\emptyset$.

\section{Consensual specifications composable by union and concatenation}\label{SectDecomposedBases}
Since it is unknown whether the whole CREG family is closed under union and concatenation, we first introduce  a normal form, named decomposed,\footnote{In~\cite{CrespiSanPietroCIAA13},  we introduced the idea of a decomposed form for certain multi-counter machines, but that definition does not work for commutative languages.} of the base languages, which is  convenient to ensure such closure properties.
 Second, we state two further conditions, named joinability and concatenability,  for decomposed forms, and we prove that they, respectively, guarantee closure under union and concatenation. Such results hold for every consensual language, but the difficulty remains to find a systematic method for constructing  base  languages that meets  such conditions. Third, in Sect.~\ref{subSectCongruences} we introduce an implementation of decomposed forms, relying on numerical congruences,  that  will permit us to prove in Sect.~\ref{SectDisciplined} that the ($\cup, \cdot$)-closure of  commutative SLIP languages is in CREG.
%

\begin{definition}[Decomposed form]\label{defBaseInDecompForm}
A base $B\subseteq \tilde{\Sigma}^* -\mathring\Sigma^+$ has the \emph{decomposed form} if there exist a (disjoint) partition of $B$ into two languages, named the \emph{scaffold} $sc$ and the \emph{fill} $fl$ of $B$, 
such that $fl$ is unproductive, and the pair $(sc, sc)$ is unmatchable.
\end{definition}
The names scaffold and fill are meant to convey the idea of an arrangement superposed just once on each word of the base and, respectively, of an optional (but repeatable) component to complete the letters which are dotted in the scaffold. 
Three straightforward  remarks follow.
For every base $B$ there exists a consensually equivalent decomposed base: it suffices to take as scaffold the language  
$\{a \emph{ dot}(y)\mid ay \in B, a\in \Sigma, y\in\widetilde\Sigma^*\}$, and as fill the language  
$\{dot(x) y\mid  x \in\widetilde\Sigma, y\in\widetilde\Sigma^*, xy \in B \}$.
For every  $s \subseteq sc$,  $f \subseteq fl$, the base $s \cup f$ is a decomposed form. 
The scaffold, but not the fill, may include words over $\Sigma$. 
\par
Consider a word  $w \in \mathcal{C}(B)$. Since the fill is unproductive, its match closure  cannot \emph{place} all the letters of $w$ and such  letters  must be placed by the scaffold. Since by definition the match closure of the scaffold alone is the scaffold itself,  the following fundamental lemma immediately holds.

\begin{lemma}\label{lemmaConsLangDecompBase}
If $B= sc \cup fl$ is in decomposed form, as in Def.~\ref{defBaseInDecompForm}, then
$\mathcal{C}(B)=  \left(sc \cup (sc \, @ \, fl^@ \,) \right) \cap \Sigma^*$. 
\end{lemma}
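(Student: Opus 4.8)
The plan is to compute the $@$-closure $B^@$ in closed form and then intersect with $\Sigma^*$, using only the two defining properties of a decomposed form: that $(sc,sc)$ is unmatchable, i.e.\ $sc \, @ \, sc = \emptyset$, and that $fl$ is unproductive, i.e.\ $fl^@ \cap \Sigma^* = \mathcal{C}(fl) = \emptyset$. The algebraic ingredients are that the match operator, lifted to languages, distributes over union, inherits associativity and commutativity from its letter-wise action (Def.~\ref{matchOperation}), and satisfies $\emptyset \, @ \, X = \emptyset$ for every language $X$.

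First I would prove the set identity
\[
B^@ \;=\; fl^@ \;\cup\; sc \;\cup\; \left(sc \, @ \, fl^@\right).
\]
The inclusion $\supseteq$ is immediate, since $sc \subseteq B = B^{0@}$, $fl^{i@} \subseteq B^{i@}$, and $sc \, @ \, fl^{i@} \subseteq B^{(i+1)@}$ for all $i \ge 0$. For $\subseteq$ I would show by induction on $i$ that $B^{i@} \subseteq fl^@ \cup sc \cup \left(sc \, @ \, fl^@\right)$. The base case is $B^{0@} = sc \cup fl$. For the step, from $B^{(i+1)@} = B^{i@} \, @ \, B$ substitute the inductive bound together with $B = sc \cup fl$, and distribute the match over the resulting unions; each summand reduces to $\emptyset$, $fl^@$, or $sc \, @ \, fl^@$: in particular $sc \, @ \, sc = \emptyset$ and $\left(sc \, @ \, fl^@\right) \, @ \, sc = \left(sc \, @ \, sc\right) \, @ \, fl^@ = \emptyset$ eliminate every term containing two scaffold factors, using associativity and commutativity, while $fl^@ \, @ \, fl \subseteq fl^@$ and $sc \, @ \, fl \subseteq sc \, @ \, fl^@$ handle the rest. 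This is the formal counterpart of the remark preceding the lemma: a word of $B^@$ is a match of base words of which at most one may come from $sc$.

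Then, intersecting with $\Sigma^*$, the term $fl^@ \cap \Sigma^* = \mathcal{C}(fl)$ is empty because $fl$ is unproductive, and distributing $\cap \Sigma^*$ over the remaining union gives
\[
\mathcal{C}(B) = B^@ \cap \Sigma^* = \left(sc \cap \Sigma^*\right) \cup \left(\left(sc \, @ \, fl^@\right) \cap \Sigma^*\right) = \left(sc \cup \left(sc \, @ \, fl^@\right)\right) \cap \Sigma^* ,
\]
which is exactly the statement.

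I expect the only delicate point to be the inductive computation of $B^@$, specifically the justification that a match of a finite multiset of base words may be reorganised so as to match all scaffold factors together first; this rests on the associativity and commutativity of $@$ (Def.~\ref{matchOperation}), lifted to languages, together with $\emptyset \, @ \, X = \emptyset$. Once the closed form for $B^@$ is in hand, the conclusion is a one-line manipulation of set operations.
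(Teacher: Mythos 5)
Your proof is correct and follows essentially the same route as the paper, which does not spell out a formal argument at all: it simply observes that the unproductive fill cannot place every letter, that $(sc,sc)$ being unmatchable forbids two scaffold factors in one match, and declares the lemma immediate. Your closed-form computation of $B^@$ by induction, using associativity/commutativity of $@$ to collapse any term with two scaffold factors to $\emptyset$, is a faithful formalization of that remark and all the steps check out.
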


\begin{example}\label{ex-scaffolding}
The table shows  the decomposed bases   of  languages $com\big((ab)^+\big)$  and $com\big((abb)^+\big)$ of Sect.~\ref{sec-consensual}, considering for brevity only the case that the  number of $a$'s is a multiple of 3. Let $L'= com\left(\{a^{3n} b^{3n}\mid n\geq 1\}\right)$, with scaffold $sc'$ and fill $fl'$, and  $L''= 
com\left(\{a^{3n} b^{6n}\mid n\geq 1\}\right)$, with scaffold $sc''$ and fill $fl''$:
\begin{center}
\scalebox{0.9}{%
\small
$
\begin{array}{c|c|c|c}
 & \textit{scaffold} & \textit{fill} & \textit{a strong match}
\\\hline
L' 
&  
(a \mathring a   a  )^+ \, \shuffle \,   ( b \mathring b b )^+
&(\mathring a^3)^* \, \mathring a   a \mathring a \,   (\mathring a^3)^* \,  
\shuffle \, (\mathring b^3)^*\, \mathring b b \mathring b \, (\mathring b^3)^*
&
\begin{array}{lllllll}
& a & b & \mathring a& a&  \mathring b& b \in sc'
\\
@ &\mathring a & \mathring b & a & \mathring a & b &  \mathring b \in fl'
\end{array}
\\
\hline
L'' 
& 
 ( \mathring a  a  a )^+ \, \shuffle \,   (\mathring b b b)^+
& (\mathring a^3)^*\, a \mathring  a  \mathring  a  \, (\mathring a^3)^* \, \shuffle 
 \, (\mathring b^3)^*\, (b \mathring b   \mathring b)^2 \, (\mathring b^3)^*
&
\begin{array}{llllllllll}
& \mathring a& \mathring b &  a & a & b & b & \mathring b & b & b\in sc'
\\
@ & a &  b &  \mathring a & \mathring a & \mathring b & \mathring b & b & \mathring b & \mathring b \in fl'
\end{array}
\end{array}
$
}
\end{center}
Clearly, every word in $sc'$ is unmatchable with every other word in $sc'$, hence $sc'@sc' =\emptyset$. Similarly, every fill is unproductive.
Every word in $L'$ is the match of exactly one word in the scaffold  with  one or more words in the fill. Analogous remarks hold for $L''$. 
\end{example}

Next, imagine to consensually specify two languages by bases in decomposed form $B'=sc' \cup fl'$ and $B''= sc'' \cup fl''$.  
By  imposing additional conditions on the  bases, we obtain two very useful theorems about composition by union and concatenation.


\begin{definition}[Joinability]\label{def-joinable}
Two base languages $B', B''$ in decomposed form
are {\em joinable} if their union  $B' \cup B''$ is decomposed, 
 with scaffold $sc'\cup sc''$ and fill  $fl'\cup fl''$, and the pairs  $(sc', fl'')$ and  $(sc'', fl')$ are unmatchable.
\end{definition}

\begin{theorem}[Union of consensual languages in decomposed form]\label{th-UnionDecomposForms}
Let the base languages $B', B''$ be in decomposed form. If  $B'$ and $B''$ are joinable
then $\mathcal{C}(B') \cup \mathcal{C}(B'') =  \mathcal{C}\left(B' \cup B''\right)$.
\end{theorem}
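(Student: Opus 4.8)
The plan is to establish the two inclusions separately. The inclusion $\mathcal{C}(B')\cup\mathcal{C}(B'')\subseteq\mathcal{C}(B'\cup B'')$ is immediate from monotonicity of the consensual operator: if $C_1\subseteq C_2$ then, by induction on $i$, $C_1^{i@}\subseteq C_2^{i@}$, hence $C_1^@\subseteq C_2^@$ and $\mathcal{C}(C_1)\subseteq\mathcal{C}(C_2)$; applying this to $B'\subseteq B'\cup B''$ and to $B''\subseteq B'\cup B''$ gives the claim. So the whole content lies in the opposite inclusion.

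For that direction I would start from the joinability hypothesis, which tells us that $B'\cup B''$ is itself in decomposed form with scaffold $S=sc'\cup sc''$ and fill $F=fl'\cup fl''$. Lemma~\ref{lemmaConsLangDecompBase} then gives $\mathcal{C}(B'\cup B'')=\big(S\cup(S\,@\,F^@)\big)\cap\Sigma^*$. Pick $w\in\mathcal{C}(B'\cup B'')$. If $w\in S\cap\Sigma^*$, then $w$ lies in $sc'\subseteq B'$ or in $sc''\subseteq B''$ as a dot-free word, so trivially $w\in\mathcal{C}(B')$ or $w\in\mathcal{C}(B'')$. The substantive case is $w\in(S\,@\,F^@)\cap\Sigma^*$: using associativity and commutativity of $@$, write $w=s\,@\,g_1\,@\,\dots\,@\,g_m$ with $s\in S$, $m\ge 1$, and $g_1,\dots,g_m\in F=fl'\cup fl''$; assume $s\in sc'$ (the case $s\in sc''$ is symmetric, swapping the roles of $(sc',fl'')$ and $(sc'',fl')$).

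The key step — and the one I expect to be the crux — is to observe that the \emph{pairwise} match $s\,@\,g_i$ is defined for every $i$, so that unmatchability of a pair forbids the presence of a fill word from the ``wrong'' base. Indeed, since $w=s\,@\,g_1\,@\,\dots\,@\,g_m$ is a strong match of $m+1$ words, the discussion following Definition~\ref{matchOperation} shows that the positions $\{1,\dots,|w|\}$ split into pairwise disjoint sets $P_s,P_1,\dots,P_m$ (the positions where $s$, resp. $g_i$, is undotted), while outside its own set each of these words carries the dotted copy of the corresponding letter of $w$. Hence $s$ and $g_i$ have equal length $|w|$ and at each position their letters match: on $P_s$ the undotted letter of $s$ matches the dotted letter of $g_i$ (since $P_s\cap P_i=\emptyset$), symmetrically on $P_i$, and on the remaining positions both carry the same dotted letter; so $s\,@\,g_i$ is defined. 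Consequently no $g_i$ can belong to $fl''$ — that would contradict the unmatchability of $(sc',fl'')$ guaranteed by joinability — so every $g_i\in fl'$. Therefore $f:=g_1\,@\,\dots\,@\,g_m\in(fl')^@$ and $w=s\,@\,f\in sc'\,@\,(fl')^@$, whence $w\in\mathcal{C}(B')$ by Lemma~\ref{lemmaConsLangDecompBase} applied to $B'=sc'\cup fl'$. This proves $\mathcal{C}(B'\cup B'')\subseteq\mathcal{C}(B')\cup\mathcal{C}(B'')$, and together with the first inclusion the theorem follows. The only point requiring a little care in the write-up is the legitimacy of flattening the iterated match $s\,@\,f$ into a single strong match of $m+1$ words so that the position-partition property applies uniformly; once that is in place, the argument is purely combinatorial.
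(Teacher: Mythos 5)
Your proof is correct and follows essentially the same route as the paper's: reduce to the nontrivial inclusion, apply Lemma~\ref{lemmaConsLangDecompBase} to the decomposed base $B'\cup B''$, fix the scaffold word, and use unmatchability of $(sc',fl'')$ (resp.\ $(sc'',fl')$) to force all fill words into the same component. The only difference is that you explicitly justify why the pairwise match $s\,@\,g_i$ is defined whenever the full match is, a point the paper's proof uses implicitly.
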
 
\begin{proof}
It suffices to prove the inclusion   $\mathcal{C}(B' \cup B'') \subseteq  \mathcal{C}(B') \cup \mathcal{C}(B'') $,  since the opposite inclusion 
is obvious by Def.~\ref{defConsensualLang}. 
Let $x \in \mathcal{C}(B) $. Since $B$ is decomposed, by Lemma~\ref{lemmaConsLangDecompBase}
it must be either $x \in sc @
fl^@$ or $x \in sc$. In the latter case,  $x$ is in $B'$ or in $B''$, and the inclusion follows.  
In the former case, there exist $n\ge 2$ words  $w_1, w_2 \dots, w_n$, with $n\le|x|$, 
$w_1 \in sc$, $w_2, \dots, w_n\in fl$ and $w_1 @ w_2 @ \dots  @ w_n = x$.
We claim that 
either $w_1 \in sc'$ and every other $w_i\in B'$,  
or $w_1 \in sc''$ and every other $w_i \in B''$, from which the thesis follows. 
Assume $w_1\in sc'$ (the case $w_1\in sc''$
is symmetrical).
If there exists $j$, $2\le j\le n$, such that $w_j\in
fl''$ (with $w_j\not\in\mathring\Sigma^+)$, then $sc'@fl''$ is not empty (it includes at least $w_1 @ w_j$), a contradiction with the hypothesis that 
$B'$ and $B''$ are joinable.
\end{proof}
\par
\begin{example}\label{ex-union}
Returning to Ex.~\ref{ex-scaffolding}, we check that the two bases are joinable. The union of the bases is  in decomposed form:
$fl'\cup fl''$ is unproductive (because letters at positions 3, 6, \ldots cannot be placed); the  pair $(sc',sc'')$ is unmatchable, hence also $(sc'\cup sc'',sc'\cup sc'')$ is unmatchable. 
Moreover, $(sc', fl'')$, and  $(sc'', fl')$ are unmatchable.
Therefore   $L' \cup L'' = \mathcal{C}(sc' \cup sc'' \cup fl' \cup fl'')$.
\end{example}
\par
For concatenation, a similar, though more involved, reasoning  requires a new technical definition.
\begin{definition}[Dot-product $\odot$ and concatenability]\label{defConcatenability}
Let $B', B''$ be in decomposed  form, and  define their \emph{dot-product}   as
$B' \odot B''= (sc'\cdot sc'') \cup fl' \cup fl''$. 
$B'$ and $B''$ are \emph{concatenable} if $B' \odot B''$ is in decomposed form, with scaffold $sc'\cdot sc''$ and fill $fl' \cup fl''$, 
and
the next two clauses hold for all words $w',w''\in\widetilde\Sigma^+,\, y' \in sc',\, y''\in sc''$:
\begin{eqnarray}
\label{equ1Concatenable}
\exists x' \in fl' : w'=x'\cdot dot(y'') \; \land \;\; x'@y' \text{ is defined } 
\text{ if, and only if, }
 w'\in fl' \wedge \; w' @ y'\cdot y'' \text{ is defined } 
\\
\label{equ2Concatenable}
\exists x'' \in fl'' : w''=dot(y') \cdot x'' \; \land \;\; x''@y'' \text{ is defined }  
\text{ if, and only if, } \;  w''\in fl'' \wedge  \; w'' @ y'\cdot y''  \text{ is defined}
\end{eqnarray}
\end{definition}
The two clauses are symmetrical. In loose terms, Clause~\eqref{equ1Concatenable} says that the fill $fl'$ contains a word $w'$ that matches $y'y''$, if, and only if, the word has a prefix $x'$ , also in $fl'$, which matches  $y'$, hence it is aligned with the point of concatenation. Therefore,  the match  $w' @ y'\cdot y''$ does not produce a word that is illegal for $\mathcal{C}(B') \cdot \mathcal{C}(B'')$. This reasoning is formalized and proved next.

\begin{theorem}[Concatenation of consensual languages in decomposed form]\label{th-ConcatDecomposForms}
Let the bases  $B', B''$ be in decomposed  form. 
 If  $B', B''$ are concatenable, then $\mathcal{C}(B') \cdot \mathcal{C}(B'') =  \mathcal{C}\left(B'\odot B''\right)$.

\end{theorem}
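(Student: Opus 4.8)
The plan is to mimic the structure of the proof of Theorem~\ref{th-UnionDecomposForms}, proving the two inclusions separately. The inclusion $\mathcal{C}(B')\cdot\mathcal{C}(B'')\subseteq\mathcal{C}(B'\odot B'')$ is the easy direction: given $x'\in\mathcal{C}(B')$ and $x''\in\mathcal{C}(B'')$, by Lemma~\ref{lemmaConsLangDecompBase} we may write $x'$ as a strong match of one scaffold word $y'\in sc'$ with some fill words from $fl'^{@}$ (or $x'=y'\in sc'$ outright), and likewise $x''$ from $y''\in sc''$ and $fl''$. The concatenation $y'\cdot y''$ lies in the scaffold $sc'\cdot sc''$ of $B'\odot B''$; and the fill words of $x'$, padded on the right with $dot$ of the appropriate length, together with the fill words of $x''$, padded on the left, all lie in $fl'\cup fl''\subseteq B'\odot B''$ (this is exactly the point of allowing $fl',fl''$ to sit unchanged in the dot-product, since fills may be repeated and extended by dotted letters). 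Matching all of these produces $x'\cdot x''$, so $x'\cdot x''\in\mathcal{C}(B'\odot B'')$. A small amount of care is needed to handle the degenerate cases where $x'$ or $x''$ is itself a scaffold word with no fill contribution.

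For the reverse inclusion $\mathcal{C}(B'\odot B'')\subseteq\mathcal{C}(B')\cdot\mathcal{C}(B'')$, let $x\in\mathcal{C}(B'\odot B'')$. Since $B'\odot B''$ is in decomposed form (by the concatenability hypothesis), Lemma~\ref{lemmaConsLangDecompBase} gives either $x\in sc'\cdot sc''$, in which case $x=y'y''$ with $y'\in sc'\subseteq\mathcal{C}(B')$ and $y''\in sc''\subseteq\mathcal{C}(B'')$ and we are done, or $x\in (sc'\cdot sc'')\,@\,(fl'\cup fl'')^{@}$. In the latter case there are words $w_1\in sc'\cdot sc''$, say $w_1=y'y''$ with $y'\in sc'$, $y''\in sc''$, and $w_2,\dots,w_n\in fl'\cup fl''$ with $w_1@w_2@\dots@w_n=x$. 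Write $|y'|=\ell$, so $x=x_1x_2$ with $|x_1|=\ell$. The goal is to show $x_1\in\mathcal{C}(B')$ and $x_2\in\mathcal{C}(B'')$. The key claim is that each fill word $w_j$ ($j\ge 2$) splits at position $\ell$ as $w_j=u_j\cdot v_j$ where $u_j\cdot dot(y'')\in fl'$ forces $u_j@y'$ defined, and $dot(y')\cdot v_j\in fl''$ forces $v_j@y''$ defined — and this is precisely what Clauses~\eqref{equ1Concatenable} and~\eqref{equ2Concatenable} deliver. Indeed, since $w_j$ matches $y'y''$ (it is one of the matching words whose overall match is $x$, so in particular $w_j@(y'y'')$ is defined), Clause~\eqref{equ1Concatenable} applied with $w'=w_j$ (when $w_j\in fl'$) or Clause~\eqref{equ2Concatenable} (when $w_j\in fl''$) yields the desired factorization and the definedness of the two half-matches. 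Consequently $\{y'\}\cup\{u_j\cdot dot(y''):j\ge2,\,w_j\in fl'\}\cup\{dot(y')\cdot v_j:\dots\}$ — no, more cleanly: $y'$ together with the $u_j$ portions matches to give $x_1$, and $y''$ together with the $v_j$ portions matches to give $x_2$; but one must verify the left pieces actually lie in $B'$ and the right pieces in $B''$. That is where the "only if" directions of the two clauses are used: a word $w_j\in fl''$ that still overlaps the left block $y'$ must, by Clause~\eqref{equ2Concatenable}, have the form $dot(y')\cdot x''$ with $x''\in fl''$, so its restriction to the left block is purely dotted and contributes nothing to $x_1$, while $x''\in fl''\subseteq B''$ contributes cleanly on the right; symmetrically for $w_j\in fl'$ overlapping the right block.

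The main obstacle I expect is the bookkeeping in this reverse inclusion: carefully partitioning the fill words $w_2,\dots,w_n$ according to whether they belong to $fl'$ or $fl''$, showing that a word in $fl'$ restricted to the right block $y''$ is all-dotted (hence contributes an $\epsilon$-like dotted padding that is harmless and can be absorbed, via the remark that $dot(y'')$ appended to something in $fl'$ stays in $fl'$ by Clause~\eqref{equ1Concatenable}), and conversely, and then assembling from these pieces a legitimate match witnessing $x_1\in B'^{@}$ and $x_2\in B''^{@}$. One must also confirm that the scaffold word $y'$ genuinely places at least one letter in $x_1$ and does not accidentally get "used up" with everything dotted — but if $y'$ were entirely dotted then $y'\in\mathring\Sigma^+$, contradicting $y'\in sc'\subseteq\tilde\Sigma^+-\mathring\Sigma^+$, so $x_1$ is nonempty and genuinely in $\Sigma^+$, and similarly $x_2$; this also rules out the edge case where one factor is empty. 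Once the partition is set up correctly, definedness of all the sub-matches is immediate from associativity of $@$ and the clauses, and closure under the match of the respective pieces gives $x_1\in B'^{@}\cap\Sigma^*=\mathcal{C}(B')$ and $x_2\in\mathcal{C}(B'')$, completing the proof.
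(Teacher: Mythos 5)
Your proposal is correct and follows essentially the same route as the paper's proof: the easy inclusion by padding each fill word with $dot$ of the opposite scaffold (justified by the left-to-right directions of Clauses~\eqref{equ1Concatenable}--\eqref{equ2Concatenable}), and the reverse inclusion by splitting the match at position $q=|y'|$ and using the right-to-left directions of the same clauses to show every fill word factors as $x'\cdot dot(y'')$ or $dot(y')\cdot x''$. The only cosmetic difference is that the paper phrases the reverse inclusion as a proof by contradiction, whereas you carry out the reassembly of the two half-matches directly.
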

\begin{proof}
Let $B=B'\odot B''$.\\ 
$\textit{Case } \mathcal{C}(B')  \cdot\, \mathcal{C}(B'') \subseteq \mathcal{C}(B).
$ 
If $x \in \mathcal{C}(B')  \cdot
\mathcal{C}(B'')$, then $x=x'x''$ with $x' \in \mathcal{C}(B')$, $x'' \in \mathcal{C}(B'')$.
Hence, $x'$ is the strong match of one $w'\in sc'$ (resp.  $w''\in sc'' $) with
$n\ge 0$ words $w'_1, \dots, w'_n\in fl'  \subseteq  fl $; analogously, $x''$
is the strong match of one $w''\in sc'' $  with $m\ge 0$ words $w''_1, \dots w''_m\in fl'' $. 
By definition of concatenability, since for $1\le i\le n$, 
every word $w'_i$ is in $fl'$, then also all words $w'_1\cdot dot(w''), w'_2\cdot dot(w''), \dots $ are in  $fl'$, hence also in $fl$. 
Similarly, also  $dot(w'') \cdot  w''_1, \dots dot(w'') \cdot w'_n$ are in $fl''$.
Since $w'\cdot w''$ is in $sc' sc''$,  it is possible to define a strong match
yielding $x' x'' = x$, namely,
\[x=w'w'' @ \left(w'_1\cdot dot(w'')\right) @ \left(w'_2\cdot dot(w'')\right) @
\dots
\left(dot(w') \cdot w''_1) @ (dot(w') \cdot w''_2 \right) @ \dots\]
that  is the concatenation of 
$w'@ w'_1@ \dots @w'_n  = x'$ with 
$w'' @ w''_1 @\dots @w''_m =x''$.
\par
$\textit{Case }  \mathcal{C}(B) \subseteq \mathcal{C}(B')  \cdot\, \mathcal{C}(B'').
$
Let $x \in \mathcal{C}(B)$. Then 
there exist $n\ge 1$ words  $w_1, w_2, \dots, w_n$, with $n\le|x|$, such that 
$w_1 @ w_2 @ \dots @ w_n = x$, $w_1 \in sc' \cdot sc''$ and   
$w_2, \dots, w_n\in  fl' \cup fl'' $.
By definition,  $w_1$ can be decomposed into $w_1 = w'_1 w'_2$ for some 
$w'_1\in sc' , w''_2\in sc'' $.  Let  $q= |w'_1|$.
Assume, by contradiction, that $x \not\in \mathcal{C}(B') \cdot\,
\mathcal{C}(B'')$. 
Since $x$ is the match of  word $w_1= w'_1 w'_2$ and words in $fl' \cup fl''$, 
the only possibility for $w$ not being in 
$\mathcal{C}(B')  \cdot\, \mathcal{C}(B'')$ is that 
there exists $j, 2\le j \le n$, such that:
\begin{enumerate}
 \item 
$w_j \in fl'$, and  the substring $w_j(1,q) \not\in fl'$, or 
\item
$w_j \in  fl'' $, and the substring
$w_j\left(q+1,|x|\right) \not\in  fl''
$.
\end{enumerate}
We consider only Case (1) since the other is symmetrical. 
Since $w_j \in fl'$ and $w_j @ w'_1 w''_1$ is defined, then,  by definition of concatenability, there exists $x'\in fl'$ such that
 $w_j= x'\cdot dot(w''_1)$, i.e., $w_j(1,q) = x'$, a contradiction with the assumption of Case (1). 
\end{proof}

\begin{example}\label{ex-catenation}
Consider again Ex.~\ref{ex-scaffolding}.  
It is easy to  check that the pair  $(sc'\cdot sc'',\,sc'\cdot sc'')$  is unmatchable, for the same reason that $(sc',sc'')$ is unmatchable. Then, we check that the  bases  $sc' \cup fl'$ and  $sc'' \cup fl''$ are concatenable. We only discuss the case of Clause \eqref{equ1Concatenable} since  Clause \eqref{equ2Concatenable} is symmetrical. 
Let $w' \in \widetilde\Sigma^+$, $y'\in sc'$, $fl''\in sc''$. If there exists $x'\in fl'$ such that $w'=x'dot(y'')$, then obviously both $w'\in fl'$ and  $w'@ y'\cdot y''$ are defined.
\\
For the converse case, assume that $w' \in fl'$ and $w'@ y'\cdot y''$ is defined. 
Consider the projections $\alpha=\pi_{\widetilde{a}}(w')$,  $\alpha'=\pi_{\widetilde{a}}(y')\in (a \mathring aa)^+$ and $\alpha''=\pi_{\widetilde{a}}(y'')\in (\mathring a aa)^+$.
Then $\alpha \in (\mathring a\mathring a \mathring a)^* \mathring aa\mathring a(\mathring a\mathring a \mathring a)^*$. Since $w'@ y'\cdot y''$ is defined, the factor $\mathring aa\mathring a$ of $\alpha$ must be 
matched with a factor of $\alpha'\alpha''$: by its form and alignment, the only possibility is that it is matched with a factor of $\alpha'$. Hence, $\alpha$ has the form  
 $(\mathring a\mathring a \mathring a)^*  \mathring aa\mathring a (\mathring a\mathring a \mathring a)^*  dot(\alpha'')$. We omit the  analogous reasoning for the projections on $b$. Since $w'@ y'\cdot y''$ is defined, then $w'$ must have the form
$x' \cdot dot(y'')$ for some $x'\in fl'$. 
Therefore  $L' \cdot L'' = \mathcal{C}(sc' \cdot sc'' \cup fl' \cup fl'')$. For instance 
\[
a^3b^3 a^3 b^6 = 
\begin{array}{ll}
a \mathring a   a b \mathring b   b \cdot \mathring a  a  a \mathring b  b  b \mathring b  b  b \;@
\\
\mathring a  a  \mathring a  \mathring b  b \mathring b \cdot \mathring a \mathring a \mathring a \mathring b  \mathring b \mathring b \mathring b \mathring b \mathring b  \;@
\\
\mathring a \mathring a  \mathring a  \mathring b \mathring b \mathring b \cdot  a \mathring a \mathring a  b  \mathring b \mathring b  b \mathring b \mathring b  
\end{array}
\]
\end{example}
\noindent This example relies on  a numerical congruence with module 3 for positioning the dotted and undotted letters. We shall see how to generalize this approach to handle words of any congruence class (with respect to the length of the projections on each letter). The generalization  will carry the cost of taking larger values for the congruence module.
\par 
Incidentally,  we observe that the theorems of this section may have a more general use than for commutative languages. Moreover, the theorems do not require the base languages to be regular; in fact, Def.~\ref{defConsensualLang} applies as well to non-regular bases (as  a matter of fact~\cite{CrespiSanPietroCIAA2012} studies  context-free/sensitive bases). 

\subsection{A Decomposed Form Relying on Congruences}\label{subSectCongruences}
Having stated some  sufficient  conditions for ensuring that the union/concatenation of two consensual languages can be obtained by composing (as described by Th. 
\ref{th-UnionDecomposForms} and Th.~\ref{th-ConcatDecomposForms}) the corresponding base languages,
 we design a decomposed form, suitable for supporting joinability and concatenability,  that uses module arithmetic for  assigning 
the positions to the dotted and  undotted letters  within a word $w$ 
over $\widetilde\Sigma$; the preceding examples offered some intuition for the next formal developments.\footnote{As said, similar ideas have been used for a different language family in~\cite{CrespiSanPietroCIAA13} and have been sketched for COM-SLIP languages in our communication~\cite{conf/ictcs/Crespi-Reghizzi13}.} 
Loosely speaking, each decomposed base language is ``personalized'' by a  sort of unique pattern of dotted/undotted letters, such that, when we want to   unite or concatenate two languages,  the match of two words with different patterns  is undefined,  thus ensuring that the union or catenation of the two  decomposed bases specifies the intended language composition. 
\par
For every $a\in\Sigma$, consider the 
projection of $w$ on $\widetilde{a}= \{a,\mathring a\}$ and, in there, the numbered positions of each $a$ and $\mathring a$. 
Let $m$ be an integer.
By prescribing that for each  base language, each undotted letter $a$ may 
only occur in  positions $j$ characterized by a specified value of the congruence   
$j \mod m$, we  make the bases decomposed. 
We need a new definition.

\begin{definition}[Slots and modules]\label{defR_m(a)R_m}\label{def-scaffolding}
Let $m>3$, called \emph{module}, be an even number. 
Let $R\subseteq \{1, \dots, (m/2 - 1)\}$ be a nonempty set, called a \emph{set of slots of module $m$}.
For every $a\in \Sigma$, define a  finite language  $R_m(a)\subset 
\tilde a^m$, where only  positions 1 and $r+1$ are dotted: 
\begin{equation}	R_m(a)  = \{\mathring a\, a^{r-1} \mathring a\,  a^{m-r-1} \mid r \in R\}
\end{equation}
The disjoint  regular  languages $\scaffolding{R}_{m}, \filling{R}_{m}\widetilde\Sigma^*$ are defined as:
\begin{eqnarray}
 \label{eq.sc-Rm} 
\scaffolding{R}_{m} &=& \left\{x  \mid  \forall a \in \Sigma, \pi_{\widetilde a}(x) \in 
\left(R_m(a) \cup a\right)^* \right\}
\\ \label{eq.fl-Rm}  
\filling{R}_{m} &=&  \sw(\scaffolding{R}_{m}) -\mathring\Sigma^*.
\end{eqnarray}
\end{definition}

The definition of $\filling{R}_m$ is clearly equivalent to $\left\{x  \mid  \forall a \in \Sigma, \pi_{\widetilde a}(x) \in 
\left(\sw(R_m(a)) \cup \mathring a\right)^* \right\} -\mathring\Sigma^*$. It is fairly obvious that $\mathcal{C}(B) = \Sigma^+$, since $\Sigma^+\subseteq \scaffolding{R}_m$.
Also, $\scaffolding{R}_{m} @ \scaffolding{R}_{m} = \emptyset$ and $\filling{R}_m$ is unproductive. The following lemma is also obvious.
\begin{lemma}\label{lm-scaffolding}
For all even numbers $m>3$ and non-empty sets $R$ of slots of  module $m$, 
every base $E\subseteq \scaffolding{R}_{m} \cup \filling{R}_{m}$ is in decomposed form, with  scaffold:  $E \cap \scaffolding{R}_{m}$  and  fill:  
$E\cap \filling{R}_{m}$.  
\end{lemma}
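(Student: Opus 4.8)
The plan is to reduce the statement to three ``ambient'' facts about the languages $\scaffolding{R}_{m}$ and $\filling{R}_{m}$ of Def.~\ref{defR_m(a)R_m}: \textbf{(i)} $\scaffolding{R}_{m}\cap\filling{R}_{m}=\emptyset$; \textbf{(ii)} $\scaffolding{R}_{m}\,@\,\scaffolding{R}_{m}=\emptyset$; \textbf{(iii)} $\filling{R}_{m}$ is unproductive. Granting these, set $sc:=E\cap\scaffolding{R}_{m}$ and $fl:=E\cap\filling{R}_{m}$. By \textbf{(i)} and the hypothesis $E\subseteq\scaffolding{R}_{m}\cup\filling{R}_{m}$, the pair $(sc,fl)$ is a disjoint partition of $E$. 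Both the match of two languages and the operator $\mathcal C$ are monotone in their base arguments — a one-line induction on the iterated match gives $B_1\subseteq B_2\Rightarrow B_1^{@}\subseteq B_2^{@}$ — so $sc\,@\,sc\subseteq\scaffolding{R}_{m}\,@\,\scaffolding{R}_{m}=\emptyset$, i.e.\ $(sc,sc)$ is unmatchable, and $\mathcal C(fl)\subseteq\mathcal C(\filling{R}_{m})=\emptyset$, i.e.\ $fl$ is unproductive. By Def.~\ref{defBaseInDecompForm} this says exactly that $E$ is decomposed with scaffold $sc$ and fill $fl$. (The same block analysis below also shows $\bigl(\scaffolding{R}_{m}\cup\filling{R}_{m}\bigr)\cap\mathring\Sigma^{+}=\emptyset$, so $E$ is genuinely a base in the sense used throughout the paper.)

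The three facts all follow from a single structural observation. Let $x\in\scaffolding{R}_{m}$ with $x\neq\epsilon$, and let $b\in\Sigma$ be the terminal underlying the last symbol $x(|x|)$, so that $x(|x|)$ is also the last symbol of $\pi_{\widetilde b}(x)$. By \eqref{eq.sc-Rm}, $\pi_{\widetilde b}(x)$ is a nonempty concatenation of blocks from $R_m(b)\cup\{b\}$, and its final block is either the singleton $b$ or a block $\mathring b\,b^{r-1}\mathring b\,b^{m-r-1}$ with $r\in R\subseteq\{1,\dots,m/2-1\}$; since $m$ is even and $m>3$ we have $m-r-1\ge m/2\ge 2$, so in either case the final block — hence $\pi_{\widetilde b}(x)$, hence $x$ — ends with an undotted $b$. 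Thus \emph{every nonempty word of $\scaffolding{R}_{m}$ ends with an undotted letter}; applying $\sw$ and recalling that $\epsilon$ is deleted in \eqref{eq.fl-Rm}, \emph{every word of $\filling{R}_{m}$ ends with a dotted letter}. Now \textbf{(i)}: a word in both languages would be nonempty, ending undotted and ending dotted, which is absurd, while $\epsilon\notin\filling{R}_{m}$. \textbf{(ii)}: if $x,y\in\scaffolding{R}_{m}$ had a defined match, then $|x|=|y|=:L$ and $x(L)\,@\,y(L)$ would be defined, which is false because $x(L),y(L)\in\Sigma$ and the match of two undotted letters is undefined (Def.~\ref{matchOperation}). \textbf{(iii)}: if $w\in\mathcal C(\filling{R}_{m})$ then $w\in\Sigma^{*}$ and $w=x_1\,@\,\cdots\,@\,x_k$ with each $x_i\in\filling{R}_{m}$ of length $|w|$; if $w\neq\epsilon$ its last symbol $w(|w|)$ is undotted, so in the strong match some $x_i$ must carry an undotted letter at position $|w|=|x_i|$, contradicting that all $x_i$ end dotted, and $\epsilon\notin\mathcal C(\filling{R}_{m})$ since $\epsilon\notin\filling{R}_{m}$.

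I do not expect a real obstacle: the lemma is a bookkeeping consequence of how the blocks of $(R_m(a)\cup a)^{*}$ are shaped, which is why the text can call it immediate. The only places that need attention are (a) choosing the right letter $b$ when reading off the last block of a projection, (b) checking that the trailing run $b^{m-r-1}$ is non-empty — precisely the point where $m>3$ together with $r\le m/2-1$ is used — and (c) the empty-word bookkeeping, dispatched by the standing convention that bases lie in $\widetilde\Sigma^{+}-\mathring\Sigma^{+}$. (Should one wish to avoid the ``ends in an undotted letter'' shortcut, \textbf{(ii)} and \textbf{(iii)} can instead be read off a density count: in any $\pi_{\widetilde a}$-projection of a scaffold word the dotted symbols are at most a fraction $2/m$ of the length, so two scaffold words cannot between them dot every position; this route additionally requires the extremal case $m=4$, where necessarily $R=\{1\}$ and the block pattern is rigid, to be checked by hand.)
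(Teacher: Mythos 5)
Your proof is correct and takes essentially the same route as the paper, which offers no explicit proof: it merely asserts, in the sentence preceding the lemma, that $\scaffolding{R}_{m}\,@\,\scaffolding{R}_{m}=\emptyset$, that $\filling{R}_{m}$ is unproductive, and (in Def.~\ref{defR_m(a)R_m}) that the two languages are disjoint, leaving the restriction-to-subsets step as obvious. You supply exactly that monotonicity reduction plus a correct verification of the three ambient facts via the last-letter observation (where $r\le m/2-1$ and $m>3$ guarantee the trailing undotted run is nonempty), which the paper omits.
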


\begin{example}
Let $m = 6, R= \{1, 2\}$ and $\Sigma=\{a,b\}$. Then
\[
\begin{array}{lll}
R_6(a) &=& \{\mathring a \mathring a  a  a a a , \,  \mathring a a \mathring a    a  a a\}
\\
\scaffolding{R}_6 & = & ({\bf\mathring a \mathring a  a  a a a} \cup {\bf\mathring a a \mathring a    a  a a} \cup a)^* \shuffle 
({\bf\mathring b \mathring b  b  b   b b} \cup  {\bf\mathring b b \mathring b    b  b b} \cup b)^*
 \\
\filling{R}_6 & = &  \left( ({\bf  a  a \mathring a \mathring a \mathring a \mathring a} \cup  {\bf a \mathring a  a \mathring   a \mathring a \mathring a }\cup \mathring a)^* \shuffle \; 
({\bf  b  b\mathring  b\mathring   b\mathring   b\mathring b } \cup   {\bf b\mathring b  b  \mathring  b \mathring b\mathring b} \cup \mathring b)^*\right) - \{\mathring a, \mathring b\}^*
\end{array}
\]
For clarity, in this example the characters in $\scaffolding{R}_6$ and in $\filling{R}_6$,  belonging to factors in $R_6(a), R_6(b)$, or $\sw(R_6(a)), \sw(R_6(b))$ respectively, 
are in bold. 
Examples of words in $\mathcal{C}(B)$ are:
\[
a^6b^6 \in \scaffolding{R}_6,\; \text{ also }  a^6b^6 = 
\begin{array}{l|l}
\bf{\mathring a a \mathring a    a  a a \mathring b b \mathring b    b  b b}\;  @ & \text{in }\scaffolding{R}_6
\\
 {\bf 
a \mathring a a    \mathring a  \mathring a\mathring a b \mathring b b    \mathring b  \mathring b\mathring b 
} 
& \text{in }\filling {R}_6
\end{array}
\]
\[
a^9b^8 \in \scaffolding{R}_6,\; \text{ also }  a^9b^8 = 
\begin{array}{l|l}
{\bf\mathring a a \mathring a    a  a a} aaa {\bf \mathring b b \mathring b    b  b b}bb \;  @ & \text{in }\scaffolding{R}_6
\\
{\bf a \mathring a a    \mathring a  \mathring a\mathring a} \mathring a\mathring a\mathring a {\bf b \mathring b b    \mathring b  \mathring b\mathring b } \mathring b\mathring b & \text{in }\filling {R}_6
\end{array}
\]

\[
(ab)^4aaabb \in \scaffolding{R}_6,\; \text{ also } (ab)^4aaabb = 
\begin{array}{l|l}
 {\bf\mathring a \mathring b a b \mathring a \mathring b   a  b a a} a {\bf      b b} @ & \text{in }\scaffolding{R}_6
\\
{\bf a b \mathring a \mathring b a b   \mathring a  \mathring b \mathring a \mathring a} \mathring a   {\bf \mathring b \mathring b} & \text{in }\filling {R}_6
\end{array}
\]
\end{example}
\par
To ensure that a base, included in $\scaffolding{R}_{m} \cup \filling{R}_{m}$, can be used when two such languages are concatenated,  we need the next simple concept.
\begin{definition}[Shiftability]\label{def-shift} 
A  language $R\subseteq \tilde{\Sigma}^*$ is {\em shiftable} if  $R= \mathring\Sigma^*\, R\, \mathring\Sigma^* $.
\end{definition}
\noindent This means that any word in $R$ remains legal, when  it is padded to the left/right  with any dotted words.
\par
Next we show that by taking disjoint sets of slots over the same module, we obtain two bases that are joinable; if, in addition, the fills are shiftable, the condition for concatenability is satisfied.

\begin{theorem}\label{th-join-concat}
Let $m>3$ and let $R',R''$ be two disjoint sets of slots of module $m$, and let 
$E'\subseteq \scaffolding{R'}_{m}\cup \filling{R'}_m$ and $E''\subseteq\scaffolding{R''}_{m}\cup \filling{R''}_m$ be two bases. Then:
\begin{itemize}
 \item $E'$ and $E''$ are joinable;
\item if the fills of $E'$ and $E''$ are shiftable, then the fills of $E'\cup E''$ and $E'\odot E''$ are also shiftable, and $E'$ and $E''$ are concatenable.
\end{itemize}
\end{theorem}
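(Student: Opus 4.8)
The plan is to reduce everything to a single combinatorial fact about the shape of words, the \emph{locality lemma}: for a nonempty slot set $R$ of module $m$, if $\pi_{\widetilde a}(u)\in(R_m(a)\cup a)^*$ and two positions $i<j$ of $\pi_{\widetilde a}(u)$ both carry $\mathring a$ with $j-i\le m/2-1$, then they lie in one and the same $R_m(a)$-factor and $j-i\in R$; dually, if $\pi_{\widetilde a}(u)\in(\sw(R_m(a))\cup\mathring a)^*$ and two positions both carry $a$ at distance $\le m/2-1$, the same conclusion holds. This follows from a one-line case split: inside a single slot factor the two marked (resp.\ unmarked) positions are $1$ and $\rho+1$, at distance $\rho\le m/2-1$; across two consecutive slot factors the distance is at least $m-\rho\ge m/2+1$, since every slot factor ends with $m-\rho-1\ge m/2$ copies of $\mathring a$. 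I would also record the projection-wise inclusions $\scaffolding{R'}_m\cup\scaffolding{R''}_m\subseteq\scaffolding{R'\cup R''}_m$, $\scaffolding{R'}_m\cdot\scaffolding{R''}_m\subseteq\scaffolding{R'\cup R''}_m$ and $\filling{R'}_m\cup\filling{R''}_m\subseteq\filling{R'\cup R''}_m$ (from $X^*Y^*\subseteq(X\cup Y)^*$ and $R'_m(a)\cup R''_m(a)\subseteq(R'\cup R'')_m(a)$); since $R'\cup R''$ is again a nonempty slot set of module $m$, every assertion about scaffolds or about the union of the two fills reduces to the already-noted facts $\scaffolding{R}_m@\scaffolding{R}_m=\emptyset$ and ``$\filling{R}_m$ unproductive''.

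For joinability, Lemma~\ref{lm-scaffolding} gives $sc'=E'\cap\scaffolding{R'}_m$, $fl'=E'\cap\filling{R'}_m$ (and likewise for $E''$); the partition of $E'\cup E''$ into $sc'\cup sc''$ and $fl'\cup fl''$ is disjoint because the former lies in $\scaffolding{R'\cup R''}_m$, the latter in $\filling{R'\cup R''}_m$, and those are disjoint. Unmatchability of $(sc'\cup sc'',sc'\cup sc'')$ is then immediate from $\scaffolding{R'\cup R''}_m@\scaffolding{R'\cup R''}_m=\emptyset$. For unproductivity of $fl'\cup fl''$, observe that in every word of $\filling{R'}_m\cup\filling{R''}_m$ the last $m/2\ge2$ positions of each $\widetilde a$-projection are marked (each slot factor ends with $\ge m/2$ marked letters), so in a hypothetical match yielding $x\in\Sigma^+$ the final position of $\pi_{\widetilde a}(x)$, for any $a$ with $|x|_a>0$, could not be placed. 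Finally, $(sc',fl'')$ is unmatchable (and $(sc'',fl')$ symmetrically): in a match $w'@w''$ with $w'\in\scaffolding{R'}_m$, $w''\in\filling{R''}_m$, choose a letter $b$ on which $w''$ carries a slot factor (one exists as $w''\notin\mathring\Sigma^*$); its two unmarked positions, at distance $\rho''\in R''$, force $\mathring b$ at the corresponding positions of $\pi_{\widetilde b}(w')$, and the locality lemma applied to $w'$ puts those two marked positions inside one $R'_m(b)$-factor, so $\rho''\in R'$, contradicting $R'\cap R''=\emptyset$. By Def.~\ref{def-joinable} this establishes joinability.

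For the concatenation part, shiftability propagates trivially: the fill of both $E'\cup E''$ and $E'\odot E''$ is $fl'\cup fl''$, and $\mathring\Sigma^*(fl'\cup fl'')\mathring\Sigma^*=\mathring\Sigma^*fl'\mathring\Sigma^*\cup\mathring\Sigma^*fl''\mathring\Sigma^*=fl'\cup fl''$. That $E'\odot E''$ is decomposed with scaffold $sc'\cdot sc''\subseteq\scaffolding{R'\cup R''}_m$ and fill $fl'\cup fl''\subseteq\filling{R'\cup R''}_m$ is obtained exactly as above. It remains to verify Clause~\eqref{equ1Concatenable} (Clause~\eqref{equ2Concatenable} being symmetric). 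The direction from a factorization $w'=x'\cdot dot(y'')$ with $x'\in fl'$ and $x'@y'$ defined to ``$w'\in fl'$ and $w'@y'\cdot y''$ defined'' is easy: $w'\in fl'$ by shiftability (as $dot(y'')\in\mathring\Sigma^*$), and $w'@y'y''$ is defined because $x'@y'$ is and $dot(y'')(j)@y''(j)$ is defined for every $j$.

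The crux, and the main obstacle, is the converse of Clause~\eqref{equ1Concatenable}: from $w'\in fl'$ and $w'@y'\cdot y''$ defined (with $y'\in sc'$, $y''\in sc''$, $q=|y'|$, $q_a=|y'|_a$) one must exhibit $x'\in fl'$ with $w'=x'\cdot dot(y'')$ and $x'@y'$ defined, the only candidate being $x'=w'(1,q)$. I would argue in two steps, both powered by $R'\cap R''=\emptyset$. Step (i), \emph{no leak}: no position of $w'$ beyond $q$ is unmarked. If $w'(q+j_0)=a\in\Sigma$, this unmarked $a$ occupies, in $\pi_{\widetilde a}(w')$, one of the two unmarked positions $P<P+r$ ($r\in R'$) of some $\sw(R'_m(a))$-factor; both force $\mathring a$ at the corresponding positions of $\pi_{\widetilde a}(y'\cdot y'')=\beta\gamma$, and the position of $w'(q+j_0)$ exceeds $q_a$, so at least one of $P,P+r$ exceeds $q_a$. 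If both exceed $q_a$, the locality lemma on $\gamma=\pi_{\widetilde a}(y'')$ yields $r\in R''$, impossible; if only $P+r$ exceeds $q_a$, then $P\le q_a$ is a marked position of $\beta=\pi_{\widetilde a}(y')$, hence $P\le q_a-m/2$ (a marked position of $\beta$ lies in an $R'_m(a)$-factor inside $[1,q_a]$), so $P+r\le q_a-1<q_a$, a contradiction. Step (ii), \emph{no straddle and conclusion}: by (i), $\pi_{\widetilde a}(w'(q+1,\cdot))$ is all marked, so $w'(q+1,\cdot)=dot(y'')$, $w'=x'\cdot dot(y'')$ with $x'=w'(1,q)$, and $x'@y'$ is defined (restriction of $w'@y'y''$ to the first $q$ positions); moreover no slot factor of $\pi_{\widetilde a}(w')$ straddles position $q_a$, since its unmarked positions (which are $\le q_a$ by (i)) align with marked positions of $\beta$, and the locality lemma would place the corresponding $R'_m(a)$-factor --- which shares its left endpoint with the straddling factor --- entirely inside $[1,q_a]$, contradicting that the latter extends past $q_a$. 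Hence $\pi_{\widetilde a}(x')$ is a complete concatenation of slot and singleton factors, i.e.\ $x'\in\filling{R'}_m$; since $x'$ carries all unmarked letters of $w'$ it is not in $\mathring\Sigma^*$; and $x'$ and $w'=x'\cdot dot(y'')$ share the same core (the word obtained by deleting the maximal marked prefix and suffix), so shiftability of $fl'$ gives $x'\in fl'$. Everything outside steps (i)--(ii) is bookkeeping; these two steps are where the bound $m>3$ (hence the bound $m/2-1$ on in-factor distances) and the disjointness of $R'$ and $R''$ do the real work.
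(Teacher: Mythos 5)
Your overall strategy is the paper's own: the ``locality lemma'' is exactly the combinatorial content of the paper's claim about \emph{restarting points} (two marked positions of a scaffold projection at distance at most $m/2-1$ must lie in one and the same $R_m(a)$-factor, so their distance is a slot), and your Steps~(i)--(ii) reproduce, in a more explicitly argued form, the paper's analysis of the rightmost restarting point of $\pi_{\widetilde a}(w')$, which disjointness of $R'$ and $R''$ forces into the $y'$-part, whence the tail of $w'$ past $|y'|$ is entirely dotted. Part~(1), the unproductivity and decomposition bookkeeping, and everything in Part~(2) up to and including the conclusion that $x'=w'(1,q)\in\filling{R'}_m$ and $w'=x'\cdot dot(y'')$ are correct and match the paper.

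The genuine gap is the final inference, ``$x'$ and $w'$ share the same core, so shiftability of $fl'$ gives $x'\in fl'$.'' Shiftability is the equation $fl'=\mathring\Sigma^*\,fl'\,\mathring\Sigma^*$, which only licenses \emph{adding} dotted padding; it does not imply that the core of a word of $fl'$ is again in $fl'$, nor that a dotted suffix can be \emph{removed}. Concretely, with $m=6$ and $R'=\{2\}$ the language $fl'=\mathring\Sigma^*\{a\mathring a a\mathring a^{6}\}\mathring\Sigma^*$ is a shiftable subset of $\filling{R'}_6$ containing $w'=a\mathring a a\mathring a^{6}$ but no word of length $6$; taking $y'$ with $\pi_{\widetilde a}(y')=\mathring a a\mathring a aaa$ and $y''=aaa$, the match $w'@y'y''$ is defined, yet the required $x'=a\mathring a a\mathring a^{3}$ is not in $fl'$, so the biconditional of Clause~\eqref{equ1Concatenable} fails. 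You are in good company: the paper's proof simply ends with ``choose $x'$ to be the prefix of $w'$ such that $w'=x'\,dot(y'')$'' and never verifies $x'\in fl'$ either; the construction is rescued only because the fills actually used downstream (the languages $X$ of Def.~\ref{def-decomposedCOMLIP}, which contain \emph{every} word of $\filling{R}_m$ with a prescribed Parikh image of its undotted projection) are closed under deletion of dotted suffixes that keep the word inside $\filling{R}_m$. To close the gap you must either add that closure property as an explicit hypothesis on the fills, or verify it where the theorem is applied; ``same core plus shiftability'' alone does not suffice.
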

\begin{proof}
Let $R= R'\cup R''$. Bases $E'$ and $E''$ are in decomposed form by Lm.~\ref{lm-scaffolding}. 
Also $E'\cup E''$ and $E'\odot E''$ are in decomposed form, since they are both subsets of $\scaffolding{R}_{m} \cup \filling{R}$.
\\
\noindent {\bf Part (1): } 
To show that $E'$ and $E''$ are joinable, we only need to prove that $(\filling{R''}_m, \scaffolding{R'}_{m})$ is unmatchable
(the case $(\filling{R'}_m, \scaffolding{R''}_{m})$ being unmatchable is symmetrical). 
By contradiction, assume that there exist $x \in \filling{R''}_m$ and $y \in\scaffolding{R'}_{m}$ such that $x@y$ is defined. 
Let $a\in\Sigma$ be a letter occurring in $x\not\in\mathring\Sigma^+$ and consider the projection $\alpha= \pi_{\widetilde{a}}(x)$. 
By definition of $\filling{R''}_m$, there exist a position $q$ of $\alpha$ and a value $r\in R''$
such that $\alpha(q)=\alpha(q+r'')=a$.
Then, there exists $\alpha'\in \pi_{\widetilde{a}}(y)$ such that
$\alpha@\alpha'$ is defined. But in $\alpha'$ for all positions $p$, $1\le p \le |\alpha'|$, if $\alpha'(p) =\mathring a$ then $\alpha'(p+r')= a$ for all $r'\not\in R'$. 
Therefore, if $p=q$ then $\alpha(p+r) = \alpha'(p+r) = a$, which is impossible by definition of matching. 
The same argument could be applied to show that also the other two pairs are unmatchable.
\\ \noindent {\bf Part (2): } Define as $\filling{E'},\scaffolding{E'}$ and as $\filling{E''},\scaffolding{E''}$ the fills and the scaffolds of $E'$ and $E''$, respectively.
 If $\filling{E'}$ and $\filling{E''}$ are shiftable, then also the fill $\filling{E'}\cup\filling{E''}$ of both $E'\cup E''$ and $E'\odot E''$ is shiftable, since the union of two shiftable languages is shiftable. 
We now prove that in this case $E',E''$ are also concatenable.
Let $w'\in\filling{E'},y'\in\scaffolding{E'},y''\in\scaffolding{E''}$.
If there exists $x'\in\filling{E'}$ such that $x'@y'$ is defined and $w' = x'dot(y'')$, then it is obvious that $w'\in\filling{E'}=\mathring\Sigma^*\filling{E'}\mathring\Sigma^*$ and 
that $w'@(y'\cdot y'')$ is defined. 
We are left to show that: 
\begin{equation} 
\label{(*)}
 \text{if } w'@(y'\cdot y'') \text{ is defined then }  \exists x'\in\filling{E'} 
\text{such that } w' = x' dot(y'') \text{ and } x'@y' \text{ is defined.}
\end{equation}
The proof of Claim \eqref{(*)} requires another technical definition.
Given a set $R$ of slots with module $m$, for $a \in \Sigma$, for every  $\alpha \in \pi_{\widetilde a}(\scaffolding{R}_{m})$ 
a {\em restarting point} for projection $\alpha$ is 
a position $i$, $1\le i \le |\alpha|-m$, 
such that $\alpha(i,i+m-1) \in R_m(a)$. Hence, at $i$ there is a factor in $R_m(a)$. 
A symmetrical definition holds if  $\alpha \in \pi_{\widetilde a}(\filling{R}_{m})$:  
factor  
$\alpha(i,i+m-1) \in switch(R_m(a))$.
A restarting point always exists for all $\alpha \in \pi_{\widetilde a}(\scaffolding{R}_{m})$ 
or $\alpha\in\pi_{\widetilde a}(\filling{R}_{m})$, 
provided that $\alpha\not\in\Sigma^+$. 
We claim that 
if $s\in \scaffolding{R}_{m}, f\in \filling{\hat{R}}_m$
for some (possibly equal) sets of slots $R, \hat{R}$ with module $m$,  and the match $s @ f$ is defined,  
 then both the following conditions hold:
\begin{align}
\label{(I)}
 &R\cap  \hat{R} \neq \emptyset, 
\\
 \label{(II)}
&\forall  a\in \Sigma, \text{ the set of restarting points for } \pi_{\widetilde a}(f) \text{ is included in the set of restarting points 
for } \pi_{\widetilde a}(s).
\end{align}
Since $f\not\in \mathring\Sigma^*$, there exists at least one $a \in \Sigma$ such that 
$\pi_{\widetilde a}(f)$  has a factor in $switch( \hat{R}_m(a))$
i.e., 
 there exists a restarting point $p$ for $\pi_{\widetilde a}(f)$.
For brevity, let $\alpha = \pi_{\widetilde a}(f)$. 
Hence, 
 $1\le p \le |\alpha|-m$.
Therefore, there exists $r\in \hat{R}$ such that $\alpha(p) = \alpha(p+r) =
a$.
Consider now $\beta =  \pi_{\widetilde a}(s)$. Since $s @ f$ was assumed to be defined, $\beta(p) = \beta(p+r) = \mathring a$.
By definition of $\scaffolding{R}_m$, 
$\beta  \in (R_m(a) \cup a )^*$. 
\\
There are two possibilities: either $p$ is a restarting point also for $\beta$, hence $r\in R$ and the above claims follow, 
or $p$ is not a restarting point for $\beta$. The latter case is however impossible. In fact, in this case $p+r$ would be a restarting point for $\beta$, 
because of the form of $R_m(a)$.  Therefore, since $\beta(p) = \mathring a$, there would be a restarting point also at position $p-r'$, for some $r'\in R$. 
However, both $r$, $r'$, by definition, are smaller than $m/2$, therefore $2 \le r+r' \le m -2$. Hence, the restarting point at $p-r'$ would be at a distance
less than $m$ from the restarting point at $p+r$, which is impossible by definition of $R_m(a)$.
\par
We  prove Claim \eqref{(*)} to finish. For every $a \in \Sigma$, let $q'_a = |\pi_{\widetilde{a}}(y')|$, and let $q''_a =|\pi_{\widetilde{a}}(y')|$. 
Consider the rightmost restarting point $p_a$ for $\pi_{\widetilde{a}}(w')$. By definition of $\filling{E'}$, 
there exists $r'\in R'$ such that $\pi_{\widetilde{a}}(w')(p_a,p_a+m)= a \mathring a^{r'-1} a \mathring a^{m-r'-1}$. 
By Claim \eqref{(II)},  
$p_a$ is also a restarting point for $\pi_{\widetilde{a}}(y'\cdot y'')$: 
there exists $r\in R'\cup R''$ such that $\pi_{\widetilde{a}}(y'y'')(p_a,p_a+m)= \mathring a a^{r-1}\mathring a a^{m-r-1}$.
We claim that $p_a\le q'_a$. In fact, if $p_a>q_a$, then $p_a$ must be a restarting point for $y''$, hence $r\in R''$: but $r=r'$, a contradiction
with the hypothesis that $R'\cap R''=\emptyset$.  
If $p_a\le q'_a$ then $p_a$ must be a restarting point for $\pi_{\widetilde{a}}(y')$, hence $r=r'$ and actually $p_a \le q_a-m$. 
Since $p_a$ is the rightmost restarting point, $\pi_{\widetilde{a}}(w')(p_a-m+1,q'_a+q''_a) \in \mathring\Sigma^+$. Choose $x'$ to be the prefix of $w'$ such that such that $w'=x'dot(y'')$.
\end{proof}

\section{Commutative SLIP languages and their \texorpdfstring{$(\cup, \cdot)$}{union-concatenation}-closure}\label{SectDisciplined}
This section proves the main result: 
\begin{theorem}[Closure under union and concatenation]\label{th-main}
The family \C is strictly included in the family of consensually regular languages: $\C \subset \text{CREG}$. 
\end{theorem}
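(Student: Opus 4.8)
The plan is to prove two things: first, the inclusion $\C \subseteq \text{CREG}$, and second, that it is strict. Strictness is the easy half: by Proposition~\ref{proposPropertyCATCOMSLIP}(1), every $\C$ language over a binary alphabet is context-free, whereas CREG is known (from the summary of CREG properties) to contain non-context-free languages over a binary alphabet, for instance $\{a^n b^{2^n}\mid n\ge 0\}$-type languages or the standard CREG witnesses; so any such language separates the two families. The real work is the inclusion, and the strategy is structural induction on the way a $\C$ language is built from COM-SLIP languages by the operations $\cup$ and $\cdot$.

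\textbf{Base case.} I would first show that every COM-SLIP language $L$ can be consensually specified by a base of the form $E\subseteq \scaffolding{R}_m \cup \filling{R}_m$ for a suitable even module $m>3$ and a singleton (or small) set of slots $R$, with $E$ in decomposed form and, crucially, with \emph{shiftable} fill. A COM-SLIP language is a finite union of COM-LIP languages, so it suffices to handle one COM-LIP language $L$ with Parikh set $\{\vec c + \sum_i n_i\vec p^{(i)}\}$; the examples in Sect.~\ref{subSectCongruences} (the $com((ab)^+)$ and $com((abb)^+)$ cases, generalized) show the template. One picks $m$ large enough to accommodate the constant and all the periods of $L$ simultaneously, chooses a slot $r\in R$ encoding $L$, and builds a scaffold inside $\scaffolding{R}_m$ whose strong matches with $\filling{R}_m$-words realize exactly the Parikh images in $\Psi(L)$ — the congruence class $j \bmod m$ of the dotted positions in each projection $\pi_{\widetilde a}(w)$ acting as the ``signature'' of $L$. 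I would verify shiftability directly from the form $\sw(R_m(a))\cup\mathring a$ of the projection constraints in \eqref{eq.fl-Rm}: padding with $\mathring\Sigma^*$ on either side stays inside $(\sw(R_m(a))\cup\mathring a)^*$.

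\textbf{Inductive step.} Suppose $L_1 = \mathcal C(E_1)$ and $L_2 = \mathcal C(E_2)$ where $E_i\subseteq \scaffolding{R_i}_m \cup \filling{R_i}_m$ with shiftable fills. The key device is to \emph{re-normalize to a common module and disjoint slot sets}: choose a module $m$ and disjoint nonempty sets $R',R''$ of slots of module $m$ such that $L_1$ is respecified by a base $E'\subseteq \scaffolding{R'}_m\cup\filling{R'}_m$ and $L_2$ by $E''\subseteq\scaffolding{R''}_m\cup\filling{R''}_m$, still with shiftable fills. (This is the ``cost of a larger module'' remarked on after Ex.~\ref{ex-catenation}: enlarging $m$ lets one carve out enough disjoint slots for both operands while preserving all the Parikh images — one would re-derive the respecification from the base-case construction with the bigger $m$.) Once this is done, Theorem~\ref{th-join-concat} applies verbatim: $E'$ and $E''$ are joinable, so by Theorem~\ref{th-UnionDecomposForms}, $L_1\cup L_2 = \mathcal C(E'\cup E'')$; and since the fills are shiftable, $E'$ and $E''$ are concatenable with shiftable fills for $E'\odot E''$, so by Theorem~\ref{th-ConcatDecomposForms}, $L_1\cdot L_2 = \mathcal C(E'\odot E'')$. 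In both cases the resulting base again lies in $\scaffolding{R}_m\cup\filling{R}_m$ with $R = R'\cup R''$ and shiftable fill, so the induction hypothesis is maintained and the bases stay regular — giving $\C\subseteq\text{CREG}$.

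\textbf{The main obstacle} I anticipate is the base case together with the re-normalization in the inductive step: one must show that an \emph{arbitrary} COM-SLIP language — not just the clean multiples-of-3 examples — admits a decomposed, shiftable, slot-based base, and that whenever two such languages are in hand one can always slide them onto a shared module with disjoint slots without changing the specified languages. Handling the general LIP constant $\vec c$ and the interaction of several periods $\vec p^{(i)}$ (whose components need not be commensurate with any naive module) will require care: likely one takes $m$ to be a common multiple of all relevant period-entries plus a margin for the constant, and encodes each LIP generator by its own factor pattern inside the slot, checking that the match closure of the fill supplies exactly the linear combinations $n_1\vec p^{(1)}+\cdots+n_q\vec p^{(q)}$ and nothing more. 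This is presumably where the ``series of lemmas'' promised for Sect.~\ref{SectDisciplined} does its work, and where the bulk of the technical effort lies; the closure machinery of Sect.~\ref{SectDecomposedBases} then assembles everything mechanically.
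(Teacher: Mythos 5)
Your proposal follows essentially the same route as the paper: a slot-based decomposed base with shiftable fill for each COM-LIP constituent (the paper's Def.~\ref{def-decomposedCOMLIP} and Th.~\ref{th-Consens(D)=LIP}, which additionally require the period components to be made even by first splitting into finitely many such COM-LIP languages), followed by the joinability and concatenability theorems over a common module with disjoint slot sets. Two small caveats: the paper avoids your per-step re-normalization --- which, as written, appeals to the base-case construction for operands that are no longer COM-LIP --- by fixing one module and pairwise disjoint singleton slot sets for \emph{all} COM-LIP leaves of the expression upfront and combining bottom-up; and for strictness it uses the non-SLIP binary witness $\{b a^1 b a^2 \cdots b a^k \mid k \ge 1\}$ rather than an unspecified non-context-free CREG language, though your route via Proposition~\ref{proposPropertyCATCOMSLIP}(1) also works with that same witness.
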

Every language in \C can be defined by an expression that combines finitely many COM-SLIP languages,  using union and concatenation; since COM-SLIP is the finite union of COM-LIP languages, we may assume that 
the expression includes only COM-LIP, rather than COM-SLIP, languages.
\par
In the sequel, we prove that  every COM-LIP language can be consensually defined in a decomposed form such  that it permits to satisfy the additional assumptions needed for union and concatenation,
hence all \C languages are in CREG. 
\paragraph{Decomposed form for COM-LIP languages}\label{SectDisciplinedFormLIP}
To expedite handling the constant terms of LIP systems, we introduce a new operation \emph{append} that combines a language and a commutative  language, the latter penetrating into the former. 
\begin{definition}[Appending]\label{def-append}
Let  $B$ be a  language over the double alphabet  $\widetilde\Sigma$.
For $a\in \Sigma$, define the (unique)  factorization 
\begin{center}
$B = B_{\widetilde{a}}\cdot B_{\widetilde\Sigma - \widetilde{a}}$
\end{center}
 where 
$B_{\widetilde{a}}\subseteq\widetilde\Sigma^*\cdot {\widetilde{a}}$ and $B_{\widetilde\Sigma - \widetilde{a}}\subseteq\left(\widetilde\Sigma-\widetilde{a}\right)^*$ are  languages, resp. ending by $\widetilde{a}$, and not using the letters  $a, \mathring a$. If neither $a$ nor $\mathring a$ occurs in $B$, let $B_{\widetilde{a}}=\varepsilon$.
Let $A\subseteq a^+$; we define the  operation, named \emph{appending  $A$ to $B$},   as follows:
\begin{center}
$B \lhd A = B_{\widetilde{a}} \cdot (B_{\widetilde\Sigma - \widetilde{a}} \shuffle A)$.
\end{center} 
Given a  commutative language $F \subseteq \Sigma^*$,  $\Sigma= \{a_1, \dots, a_k\}$,
the iterative application of  the previous operation to every letter of the alphabet (in any order) defines the operation, named \emph{letter-by-letter appending} $F$ to $B$, as:
\begin{center}
$B\lhd F = \left(\dots (B \lhd {\pi_{a_1}(F)}) \lhd {\pi_{a_2}(F)} ) \dots \right) \lhd {\pi_{a_k}(F)}$.
\end{center}
\end{definition}
To illustrate, we  compute:
\begin{gather*}
\{\mathring a b \mathring a \mathring b \} \lhd \{ac, ca \} 
= \left( \{\mathring a b \mathring a \mathring b \}\lhd \pi_a\{ac, ca \}\right) \lhd \pi_c\{ac, ca \}=\\
= \left( \{\mathring a b \mathring a \mathring b \}\lhd \{a \}\right) \lhd \{c \} =
 \left( \{\mathring a b \mathring a  \} (\mathring b  \shuffle a) \right) \lhd \{c \}=\\
=\{\mathring a b \mathring a \mathring b a, \mathring a b \mathring a a \mathring b  \}\lhd \{c \}=
\{\mathring a b \mathring a \mathring b a, \mathring a b \mathring a a \mathring b  \}\shuffle  \{c \}
\end{gather*} 

In the remainder of the Section, let $L$ be a COM-LIP language over $\Sigma=\{a_1, \ldots, a_k\}$, $k>0$, defined by  constant 
$\vec{c}$   and  periods 
$\mathcal{P} = 
\left \{\vec{p}^{(1)}, \ldots,
\vec{p}^{(q)}\right\}$, for some $q>0$, with the condition that 
for every $\vec{p}\in \mathcal P$, every component $p_i$ is even.

The next definition introduces some sets, called $X, Y,W$,  to define the COM-LIP language $L$ with a base $D$ in decomposed form.
The assumption on each $p_i$ being even will be lifted when defining COM-SLIP languages.

\begin{definition}\label{def-decomposedCOMLIP}
For all even integers $m\ge 4$, and for all sets of slots $R$ of the form $\{r\}$ with $0<r<m/2$, 
define the regular languages $X, Y, D\subseteq \widetilde\Sigma^*$ and  the finite commutative language $W\subseteq \Sigma^*$,
as follows:
\begin{equation}
X= \bigcup_{\vec{p}\in \mathcal{P}}\{x\in \filling{R}_m 
\mid 
\Psi(\pi_\Sigma(x)) = \vec{p}
\}
\label{eqX}
\end{equation}
\begin{equation}
Y = (R_m(a_1))^* \shuffle \dots \shuffle (R_m(a_k))^* 
\label{eqY} 
\end{equation}
\begin{equation}
\Psi(W) = 
\left\{ \vec{c} +h_1 \cdot\vec{p}^{(1)}+\ldots +  h_q \cdot
\vec{p}^{(q)} \mid 0 \leq h_1, \ldots, h_q <m/2 \right\}.
\label{eqW}
\end{equation}
\begin{equation}
\label{eqDisciplBaseLang}
D = 
 X 
\;\cup\,\left( Y \lhd W \right)
\end{equation}
\end{definition}

It is obvious that $X\subseteq \filling{R_m}$. To see that $Y \lhd W \subseteq \scaffolding{R_m}$, 
we first describe  relevant features of the formulae.
By Eq.~\eqref{eqW},  $W$ is the finite commutative language having as Parikh image 
the linear subspace included between $\vec{c}$ and $ \vec{c} + (m/2-1) \vec{p}^{(1)}+\ldots + (m/2-1) \vec{p}^{(q)}$.
For each  $a_i$, the projection on $a_i$ of a word in $Y \lhd W $ ends 
with a tail of  undotted $a_i$'s defined by  Eq.~\eqref{eqW}. While the projection on $a_i$  of $\scaffolding{R}_m$ has necessarily length multiple of  $m$, 
the tail does not need to comply with such constraint, thus allowing, in principle, the language $Y \lhd W$ to contain words whose projections on $a_i$ 
has any length greater or equal to $c_i$ (within the specified subspace).
The following lemma is immediate:
\begin{lemma} 
Let $X,Y,W,D$ as in Def.~\ref{def-decomposedCOMLIP}. 
Then,  $D$ is a decomposed base included in $\scaffolding{R_m} \cup \filling{R_m}$, with $Y \lhd W\subseteq \scaffolding{R_m}$ being the scaffold and $X \subseteq \filling{R_m}$ being the fill; 
moreover, the fill of $D$ is shiftable, i.e., $X = \mathring{\Sigma}^* X \mathring{\Sigma}^*$. 
\end{lemma}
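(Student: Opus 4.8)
The plan is to reduce everything to Lemma~\ref{lm-scaffolding}, which already guarantees that any base contained in $\scaffolding{R}_m \cup \filling{R}_m$ is in decomposed form, with scaffold the portion lying in $\scaffolding{R}_m$ and fill the portion lying in $\filling{R}_m$. Since $\scaffolding{R}_m$ and $\filling{R}_m$ are disjoint, it suffices to establish three facts: (i) $X \subseteq \filling{R}_m$; (ii) $Y \lhd W \subseteq \scaffolding{R}_m$; and (iii) $X = \mathring\Sigma^* X \mathring\Sigma^*$. From (i) and (ii), disjointness yields $D \cap \filling{R}_m = X$ and $D \cap \scaffolding{R}_m = Y \lhd W$, so Lemma~\ref{lm-scaffolding} applied to $E = D$ gives the decomposed form with the announced scaffold and fill, while (iii) is exactly the shiftability of the fill.

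Fact (i) is immediate from Eq.~\eqref{eqX}, which builds membership in $\filling{R}_m$ into the definition of $X$. For fact (ii) I would first observe that $Y \subseteq \scaffolding{R}_m$: by Eq.~\eqref{eqY} one has $\pi_{\widetilde{a_i}}(Y) = (R_m(a_i))^* \subseteq (R_m(a_i)\cup a_i)^*$ for every letter $a_i$, which is the condition~\eqref{eq.sc-Rm}. Then I would describe how the appending operation acts on projections. If powers of a single letter $a$ are appended via $\lhd$ to a base $B$, each word $b = uv$ of $B$ (with $u$ ending at the last occurrence in $b$ of a letter of $\widetilde a$, and $v$ free of $a$ and $\mathring a$) is replaced by words of the form $u\,(v\shuffle w)$; hence $\pi_{\widetilde a}$ of the result is $\pi_{\widetilde a}(b)\cdot w$ --- a block of undotted $a$'s is appended on the right of the projection on $\widetilde a$ --- whereas $\pi_{\widetilde{a'}}$ is unchanged for every other letter $a'$, because interleaving copies of $a$ does not reorder the occurrences of $\widetilde{a'}$. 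Carrying this out letter by letter to form $Y \lhd W$, one obtains $\pi_{\widetilde{a_i}}(Y \lhd W) \subseteq (R_m(a_i))^*\, a_i^* \subseteq (R_m(a_i)\cup a_i)^*$ for every $a_i$, i.e.\ $Y \lhd W \subseteq \scaffolding{R}_m$.

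For fact (iii), padding a word on the left or right with dotted letters $\mathring b$ merely prepends or appends $\mathring b$'s to the projection on $\widetilde b$ and leaves the other projections intact; since $\mathring b$ is one of the generators of $(\sw(R_m(b))\cup\mathring b)^*$, membership in $\filling{R}_m$ is preserved, and the padded word still contains an undotted letter, hence stays outside $\mathring\Sigma^*$. Moreover $\pi_\Sigma$ ignores dotted symbols, so the Parikh constraint $\Psi(\pi_\Sigma(x)) = \vec p$ of Eq.~\eqref{eqX} survives the padding; thus $\mathring\Sigma^* X \mathring\Sigma^* \subseteq X$, and the reverse inclusion is trivial. I expect the only slightly delicate point to be the bookkeeping in fact (ii): one must verify that the append for a given letter really does leave untouched the projections on all the other letters, so that the tails produced by $W$ accumulate independently on each $\pi_{\widetilde{a_i}}$ and never corrupt one another; the rest is routine.
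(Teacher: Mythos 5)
Your proposal is correct and follows the same route the paper intends: the paper declares the lemma ``immediate'' after observing exactly your facts (i) and (ii) --- that $X\subseteq \filling{R}_m$ by construction and that $Y\lhd W$ projects, for each letter $a_i$, onto a word of $(R_m(a_i))^*$ followed by a tail of undotted $a_i$'s --- and the reduction to Lemma~\ref{lm-scaffolding} together with the disjointness of $\scaffolding{R}_m$ and $\filling{R}_m$ is the intended justification. Your additional care about the append operation leaving the other letters' projections untouched, and about padding preserving both the Parikh constraint and membership outside $\mathring\Sigma^*$, simply makes explicit what the paper leaves implicit.
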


\begin{example}
Consider the language $L''_{even} = com\big((a^{2} b^{4})^*\big)$ having  the period $p_{a} = 2, p_{b} = 4$ and null constant. 
Notice that to obtain language $com\big((ab^{2})^*\big)$, it is enough to apply union to $L''_{even}$ and to the language 
 $L''_{odd} = com\left(abb(a^{2} b^{4})^*\right)$, which can be defined with 
the same period $p_{a} = 2, p_{b} = 4$, and with constant $c_a = 1, c_b =2$.
If module $m=6$ and  set of slots  $R=\{ 2\}$ then $R_6(a) = \mathring a a\mathring  a { a}^3,\, R_6(b) = \mathring b b\mathring  b {b}^3$. Also, 
$\filling{R}_6= \left( \left(a \mathring a a {\mathring a}^3  \cup \mathring a \right)^* \;\shuffle \;\left(b \mathring b b {\mathring b}^3  \cup \mathring b \right)^* \right) - \{ \mathring a, \mathring b \}^*
$. Let
\begin{eqnarray*}
X &=& \{x \in \filling{R}_6  \mid \Psi\left(\pi_{\{a,b\}}(x) \right) = (2,4) \} 
\\
&=& \left( {\mathring  a}^* \cdot a \mathring a a {\mathring a}^3\cdot {\mathring  a}^*\right) \;\shuffle \; 
\left({\mathring  b}^* \cdot b \mathring b b {\mathring b}^3\cdot {\mathring  b}^*\cdot b \mathring b b {\mathring b}^3\cdot {\mathring  b}^*\right)
\\
Y &=& \left(R_6(a)\right)^* \shuffle  \left(R_6(b)\right)^* = 
\left(\mathring a a\mathring  a { a}^3 \right)^* \shuffle \left(\mathring b b\mathring  b {b}^3 \right)^*
\end{eqnarray*}
 Both $X$ and $Y$ satisfy Def.~\ref{def-decomposedCOMLIP}.
To complete the base of  language $L''_{even}$, we define 
\[
 W = \bigcup_{0\le i \le 2} com\left(a^{2i} b^{4i}\right)
\] 
The fill  $\{ \mathring a, \mathring b \}^* X \{ \mathring a, \mathring b \}^*$ and the  scaffold $Y \lhd W$ are a decomposed form for  $L''_{even}$.
Similarly, to define  $L''_{odd}$, we have to define  the sets $X', Y', W'$;  for  $X', Y'$ we select as set of slots $R'=\{1\}$, which satisfies
 $R'\cap R = \emptyset$. At last,  $W' = \bigcup_{0\le i \le 2} com\left(abb a^{2i} b^{4i}\right)$.
\end{example}
The important property of the language in Eq.~\eqref{eqX} is stated  next.
\begin{lemma}\label{lm-LinearCombinOfPeriods} 
\begin{enumerate}
\item 
For all $n>0$, for every $
 u \in  
X^{n@}
$
there exist $q\ge 1$ integers $n_1, \ldots, n_q \geq 0$ with $n=n_1 + \ldots + n_q$ such that 
\[\Psi\left( \pi_\Sigma\, (u) \right) = n_1 \cdot \vec{p}^{(1)} + \ldots + n_q \cdot \vec{p}^{(q)}.\]
\item For all $n, n_1, \ldots, n_q \geq 0$, with  $n_1 + \ldots + n_q = n$ , 
if 
\[u \in \filling{R}_m \;\text{ and }\; \Psi\left( \pi_\Sigma\, (u) \right) = n_1 \cdot \vec{p}^{(1)} + \ldots + n_q \cdot \vec{p}^{(q)}\]
then $ u \in   X ^{n@}$.
\end{enumerate}
\end{lemma}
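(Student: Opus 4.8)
The plan is to analyze the structure of the language $X$ in Eq.~\eqref{eqX} and to use the fact that $X$ is a finite union, over the periods $\vec{p}\in\mathcal P$, of sets of words in $\filling{R}_m$ whose $\Sigma$-projection has Parikh image exactly $\vec{p}$. Recall that $\filling{R}_m = \sw(\scaffolding{R}_m) - \mathring\Sigma^*$, so that the $\widetilde a$-projection of any $x\in\filling{R}_m$ lies in $\bigl(\sw(R_m(a))\cup \mathring a\bigr)^*$, i.e.\ is built from blocks of the form $a\,\mathring a^{\,r-1}a\,\mathring a^{\,m-r-1}$ (contributing two undotted $a$'s, hence an even count) interleaved freely with stand-alone $\mathring a$'s. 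The key observation underlying \emph{both} parts is: when several words of $\filling{R}_m$ are strongly matched (we will only need matches that remain inside $\filling{R}_m^@$, i.e.\ weak ones), their $\Sigma$-projections simply add component-wise, because in each position exactly one operand is undotted. Thus $\Psi(\pi_\Sigma(w_1@\cdots@w_n)) = \sum_i \Psi(\pi_\Sigma(w_i))$.

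\emph{Part (1).} I would take $u\in X^{n@}$, so $u = x_1 @ \cdots @ x_n$ with each $x_j\in X$. By definition of $X$, each $x_j$ has $\Psi(\pi_\Sigma(x_j))=\vec p^{(i_j)}$ for some index $i_j\in\{1,\dots,q\}$. By the additivity observation, $\Psi(\pi_\Sigma(u))=\sum_{j=1}^n \vec p^{(i_j)}$; collecting equal indices and setting $n_i$ to be the number of $j$ with $i_j=i$ gives $n_1+\cdots+n_q=n$ and $\Psi(\pi_\Sigma(u))=\sum_i n_i\vec p^{(i)}$, which is the claim. This direction is essentially bookkeeping once additivity is in hand.

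\emph{Part (2).} This is the genuine direction and the main obstacle: given $u\in\filling{R}_m$ with $\Psi(\pi_\Sigma(u))=\sum_i n_i\vec p^{(i)}$ and $\sum_i n_i = n$, I must \emph{decompose} $u$ as a match of $n$ words of $X$, i.e.\ of words whose individual $\Sigma$-projection Parikh images are the prescribed periods. The plan is to work letter by letter on the projections: for each $a\in\Sigma$, $\pi_{\widetilde a}(u)$ is a free interleaving of $\sw(R_m(a))$-blocks and single $\mathring a$'s, and the total number of undotted $a$'s equals $\sum_i n_i (p_i^{(a)})$, which (since every $p_i^{(a)}$ is even) partitions into $\sum_i n_i (p_i^{(a)}/2)$ blocks of two undotted $a$'s. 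I would group these blocks into $n$ bundles, the $j$-th bundle containing $p^{(i_j)}_a/2$ blocks for each letter $a$, so that bundle $j$ carries exactly the Parikh image $\vec p^{(i_j)}$; then define $x_j$ to be the word over $\widetilde\Sigma$ that keeps the undotted letters of bundle $j$ in their original positions in $u$ and dots every other position. One checks that each $x_j$ again lies in $\filling{R}_m$ (its $\widetilde a$-projection still consists of the chosen $\sw(R_m(a))$-blocks separated by $\mathring a$'s — crucially, a block at slot $r$ stays a legal $\sw(R_m(a))$-factor because the intervening undotted letters of $u$, now dotted in $x_j$, just enlarge the surrounding $\mathring a$-runs, and $X$ is shiftable by the previous lemma), that $\Psi(\pi_\Sigma(x_j))=\vec p^{(i_j)}\in\mathcal P$, hence $x_j\in X$, and that $x_1@\cdots@x_n = u$ because in every position of $u$ the unique undotted letter belongs to exactly one bundle. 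The delicate points to get right are that the block structure of $\pi_{\widetilde a}(u)$ really does admit such a partition into blocks (no two undotted $a$'s can be ``orphaned'' outside a $\sw(R_m(a))$-factor — this follows from $u\in\filling{R}_m$ and the form of $R_m(a)$), and that reassigning a position from undotted to dotted never breaks the membership $x_j\in\filling{R}_m$; both reduce to the combinatorics of the slot definition already used in the proof of Theorem~\ref{th-join-concat}.
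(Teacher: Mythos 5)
Your proof is correct and takes essentially the same route as the paper's: Part (1) is the same Parikh-additivity bookkeeping over the factors of the match, and for Part (2) the paper likewise selects, for each $i$, $n_i$ words of $X$ with Parikh image $\vec{p}^{(i)}$ whose match is $u$ --- it merely asserts that such a selection ``can always'' be made, whereas you make the selection explicit via the partition of the two-letter blocks of $u$ into bundles and verify membership in $\filling{R}_m$. No gap.
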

\begin{proof}
Part (1). By definition of $X$, if $x \in X $, then 
there exists $\vec{p}^j\in \mathcal{P}$, $1\le j \le q$, such that $\Psi\left(\pi_{\Sigma(x)}\right) = \vec{p}^j$.
By definition of  match closure, there exists $n>0$ words $x_1, \dots x_n \in 
X$
such that $u = x_1 @ x_2 @ \dots @ x_n$. 
Then, for all $1\le i \le n$, $\Psi(\pi_{\Sigma(x_i)} = \vec{p}^{j_i}$ for some $j_i$, with $1\le j_i \le q$. Hence, 
$\Psi\left( \pi_\Sigma\, (u) \right) = \sum_{1\le i \le n} \Psi(\pi_{\Sigma(x_i)})$, from which the thesis follows immediately.   
Part (2).
By definition of $X$, for every vector $\vec{p}^j$, $1\le j \le q$,
language 
$X$
includes all words $x$ of $\filling{R}_m$ such that $\Psi(\pi_{\Sigma(x)}) = \vec{p^j}$.
Hence, one can always select  $n_1$ words $x^{[1]}_1, \dots, x^{[1]}_{n_1}\in X$,
$n_2 $  words $x^{[2]}_1, \dots , x^{[2]}_{n_2} \in 
X$,
etc., such that: 

i) $\Psi\left( \pi_\Sigma\, \left(x^{[j]}_{i}\right) \right)= \vec{p}^j$, for every $1\le j \le q$, $1\le i \le n_j$;

ii) $x^{[1]}_1@ \dots @ x^{[1]}_{n_2} @ x^{[2]}_1@ \dots @ x^{[2]}_{n_2} @ \dots @x^{[q]}_{1}@ \dots @ x^{[q]}_{n_q} = u$. 
\end{proof}

\begin{lemma}
The consensual language $\mathcal{C}(D)$ is commutative. 
\end{lemma}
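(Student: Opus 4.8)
The plan is to show that $\mathcal{C}(D)$ is closed under permutation of letters, i.e., that $\Psi(x) = \Psi(x')$ and $x \in \mathcal{C}(D)$ imply $x' \in \mathcal{C}(D)$. By Lemma~\ref{lemmaConsLangDecompBase}, a word $x \in \mathcal{C}(D)$ is either in the scaffold $Y \lhd W$ (intersected with $\Sigma^*$), or it is a strong match $w_0 @ u$ where $w_0 \in Y \lhd W$ is one scaffold word and $u \in X^{n@}$ for some $n \ge 1$ is a match of fill words. The key observation is that, by Lemma~\ref{lm-LinearCombinOfPeriods}, the Parikh image of $\pi_\Sigma(u)$ equals $n_1 \vec p^{(1)} + \dots + n_q \vec p^{(q)}$ for some nonnegative $n_i$ summing to $n$, and by the definition of $W$ (Eq.~\eqref{eqW}) the scaffold contributes a Parikh image of the form $\vec c + h_1 \vec p^{(1)} + \dots + h_q \vec p^{(q)}$ with $0 \le h_i < m/2$. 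So $\Psi(x)$ ranges exactly over the linear set $\Psi(L)$, as it must.

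The main work is the reverse direction: given any $x' \in \Sigma^*$ with $\Psi(x') \in \Psi(L)$, I must exhibit a scaffold word and fill words that strongly match to $x'$. First I would write $\Psi(x') = \vec c + N_1 \vec p^{(1)} + \dots + N_q \vec p^{(q)}$ with $N_i \ge 0$, and split each $N_i = h_i + n_i$ where $0 \le h_i < m/2$ (the ``constant-like'' remainder, absorbed by $W$) and $n_i \ge 0$ is a multiple-of... actually just the remaining quotient-part, absorbed by the match closure $X^{n@}$ with $n = \sum n_i$. Then I would construct, letter by letter via the appending operation, a scaffold word $w_0 \in Y \lhd W$ whose projection on each $\widetilde{a}_i$ has the right length: a block-of-$m$ structure coming from $(R_m(a_i))^*$ realizing $n_i \vec p^{(1)} + \dots$ worth of undotted letters in the scaffold portion... no — more carefully, the scaffold $Y$ places the undotted letters at the $R$-slots and leaves the rest dotted, while the fill $X$ (via Lemma~\ref{lm-LinearCombinOfPeriods}(2)) places exactly the complementary letters so that the total placed count on $a_i$ is $(\vec c)_i + N_i$-worth. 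Since the construction depends only on the \emph{number} of each letter and the appending/shuffle operations are inherently commutative, the resulting $x'$ is automatically obtained regardless of the order of letters in $x'$.

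Concretely, the key steps in order: (i) invoke Lemma~\ref{lemmaConsLangDecompBase} to reduce to analyzing $sc \cup (sc @ fl^@)$; (ii) for the easy inclusion (commutative closure of $\mathcal{C}(D)$ contained in $\mathcal{C}(D)$ need not be shown directly — rather show: if $\Psi(x) = \Psi(x')$ and $x \in \mathcal{C}(D)$ then $x' \in \mathcal{C}(D)$) compute $\Psi(\mathcal{C}(D))$ using Lemma~\ref{lm-LinearCombinOfPeriods}(1) and Eq.~\eqref{eqW}, obtaining that $\Psi(\mathcal{C}(D)) = \Psi(L)$; (iii) for membership, take $x'$ with $\Psi(x') \in \Psi(L)$, decompose the coefficients, build the scaffold word $w_0$ using $Y \lhd W$ — here I use that $R = \{r\}$ is a single slot, so $R_m(a_i) = \{\mathring{a}_i a_i^{r-1} \mathring a_i a_i^{m-r-1}\}$ has exactly one element, making the scaffold structure on each letter essentially unique once the length is fixed; (iv) apply Lemma~\ref{lm-LinearCombinOfPeriods}(2) to obtain $u \in X^{n@}$ with $\Psi(\pi_\Sigma(u))$ equal to the period-part, and whose dotted/undotted pattern on each $\widetilde{a}_i$ is exactly complementary to that of $w_0$ (this is where the slot discipline and the shiftability/shape of $\filling{R}_m$ matter — the fill can pad with leading/trailing $\mathring a_i$ to align); (v) conclude $w_0 @ u = x' \in \mathcal{C}(D)$.

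I expect the main obstacle to be step (iv): verifying that the fill word $u$ can be chosen so that its dotted-letter positions on each $\widetilde a_i$ exactly cover the undotted-letter positions of the scaffold word $w_0$ and vice versa, i.e., that the two patterns are genuinely complementary and the match is \emph{strong}. This requires tracking, for each letter $a_i$, how the tail of undotted $a_i$'s contributed by $W$ (which need not be a multiple of $m$) interacts with the block structure, and checking that Lemma~\ref{lm-LinearCombinOfPeriods}(2) gives enough freedom in placing the $X$-words (using shiftability of $X$, already established, to slide fill blocks into position). The commutativity itself is then essentially free, because every ingredient — $W$, the shuffle in $Y$, the appending operation — is defined purely in terms of letter counts; the content of the lemma is really that $\mathcal{C}(D)$ equals the commutative language $L$, and commutativity is a corollary.
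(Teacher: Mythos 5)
Your plan sets up the right statement ($\Psi(x)=\Psi(x')$ and $x\in\mathcal{C}(D)$ imply $x'\in\mathcal{C}(D)$), but the route you take is essentially to re-prove that $\mathcal{C}(D)=L$ from scratch and deduce commutativity as a corollary. In the paper the logical order is the reverse: this lemma is proved first, by a short structural argument, and is then \emph{used} in the proof of Theorem~\ref{th-Consens(D)=LIP} (in the inclusion $L\subseteq\mathcal{C}(D)$ the paper only constructs a witness for one conveniently ordered permutation $v$ of the target word and then invokes commutativity to reach the target itself). Your step (iv) --- exhibiting, for an \emph{arbitrary} interleaving $x'$, a scaffold word and fill words whose dotted/undotted patterns are exactly complementary so that the match is strong --- is precisely the hard alignment problem that the paper's ordering of results is designed to avoid, and you explicitly leave it unresolved (``I expect the main obstacle to be step (iv)''). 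As it stands, that is a genuine gap, not a detail.

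The missing idea is that commutativity can be obtained without computing $\Psi(\mathcal{C}(D))$ at all, by permuting the \emph{witness} rather than rebuilding it. Both components of $D$ constrain only the per-letter projections: $Y\lhd W$ equals the shuffle $\pi_{\widetilde a_1}(Y\lhd W)\shuffle\dots\shuffle\pi_{\widetilde a_k}(Y\lhd W)$, and $X$ contains every re-shuffling of the per-letter projections of each of its words. So if $u=z@x_1@\dots@x_n$ with $z\in Y\lhd W$, $x_j\in X$, and $v$ is any permutation of $u$, then the projections $\pi_{\widetilde a_i}$ of $z$ and of each $x_j$ need not change at all; one merely re-interleaves them following the letter order of $v$, obtaining $z'\in Y\lhd W$ and $x'_j\in X$ with $z'@x'_1@\dots@x'_n=v$. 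The strongness of the new match is automatic because each per-letter projection of the match is unchanged. This one-paragraph argument replaces your steps (ii)--(v) entirely; to repair your proof you should either adopt it or actually carry out the alignment construction of step (iv), which would amount to proving Theorem~\ref{th-Consens(D)=LIP} without the aid of this lemma.
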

\begin{proof}
We notice first that $Y\lhd W$ and $X$ obviously verify the following two conditions: 
\begin{enumerate}[I)]
\item\label{en-shuffle} 
$Y\lhd W= \pi_{\widetilde a_1} (Y\lhd W) \shuffle \pi_{\widetilde a_2} (Y\lhd W) \shuffle \dots \shuffle 
\pi_{\widetilde a_k} (Y\lhd W)$; 
\item\label{en-shuffleX} 
if $x \in X$ then $\pi_{\widetilde a_1} (x) \shuffle \pi_{\widetilde a_2} (x) \shuffle \dots \shuffle 
\pi_{\widetilde a_k} (x) \subseteq X$.
\end{enumerate}
Let $u\in \mathcal{C}(D)$ and let $v\in \Sigma^+$ be such that 
$\Psi(v) = \Psi(u)$. 
Word $u$ is defined as $z@x_1 @ \dots @ x_n$, for some  $z \in Y\lhd W$, $n>0$ and some 
$x_1, \dots, x_n\in  X$. 
Word $v$ is a permutation of $u$, hence for all $a_i\in \Sigma$ $\pi_{a_i}(u) = \pi_{a_i}(v)$. 
By Prop.~(I) above, there exists a permutation $z'$ of $z$, such that 
$z'\in \scaffolding{R}_m\lhd W$, 
with $undot(z') = v$. 
Similarly, by Prop.~(II) above, for all $1\le j \le n$, there exists a permutation $x'_j$ of $x_j$ such that, 
for all $a_i\in\Sigma$, $\pi_{\widetilde a_i}(x'_j)= \pi_{\widetilde a_i}(x_j)$ and, moreover, such that 
$z' @ x'_i$ is  defined, with  $\pi_{\widetilde a_i}(z'@x'_i)=\pi_{\widetilde a_i}(z@x_i).$ Hence, also 
$z'@x'_1 @ \dots @ x'_n$ is defined, therefore $z'@x'_1 @ \dots @ x'_n= undot(z') = v$.  
\end{proof}
Next, Th.~\ref{th-Consens(D)=LIP} shows that $D$ consensually defines $L$, with $m$ and $r$ 
arbitrarily large. 

\begin{theorem}\label{th-Consens(D)=LIP}
For all even integers $m\ge 4$ and for every $R$ of the form $\{r\}$, with $1\le r \le m/2-1$, 
there exists 
 a decomposed base $D$ as in Def.~\ref{def-decomposedCOMLIP} such that 
the COM-LIP language $
L= 
\mathcal{C}\left(D \right).
$
\end{theorem}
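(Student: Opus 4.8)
The plan is to prove the two inclusions $\mathcal{C}(D) \subseteq L$ and $L \subseteq \mathcal{C}(D)$ separately, exploiting Lemma~\ref{lemmaConsLangDecompBase}, which (since $D = (Y \lhd W) \cup X$ is decomposed with scaffold $Y \lhd W$ and fill $X$) tells us that $\mathcal{C}(D) = \big( (Y \lhd W) \cup \big( (Y \lhd W)\, @\, X^@ \big) \big) \cap \Sigma^*$. The first observation is that a word in the scaffold alone, $z \in Y \lhd W$, when it lies in $\Sigma^+$, must have each projection $\pi_{\{a_i\}}$ of length exactly $c_i$ (no factor of $R_m(a_i)$ can contribute undotted letters without also contributing dotted ones), so $\Psi(z) = \vec c \in \Psi(L)$; these are the $n_1 = \dots = n_q = 0$ cases. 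The interesting words arise from $z\, @\, u$ with $z \in Y \lhd W$ and $u \in X^{n@}$ for some $n \ge 1$.

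For the inclusion $\mathcal{C}(D) \subseteq L$: take $x = z\, @\, u \in \Sigma^+$ with $z \in Y \lhd W$, $u \in X^{n@}$. By Lemma~\ref{lm-LinearCombinOfPeriods}(1), $\Psi(\pi_\Sigma(u)) = n_1 \vec p^{(1)} + \dots + n_q \vec p^{(q)}$ for suitable $n_j \ge 0$ with $\sum n_j = n$. Since the match is strong, $\Psi(x) = \Psi(\pi_\Sigma(u)) + \Psi(v)$, where $v = undot$ of the scaffold contribution; one must check that $\Psi(\text{scaffold part of }z) \in \Psi(W)$, i.e. equals $\vec c + h_1 \vec p^{(1)} + \dots + h_q \vec p^{(q)}$ with $0 \le h_j < m/2$. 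This follows from the structure of $Y \lhd W$: the $R_m(a_i)^*$ part contributes equal numbers of dotted and undotted letters in each projection (hence contributes nothing to the undotted count placed by $z$ beyond what the appended $W$-tail gives), so the letters $z$ places are exactly those coming from the appended copy of $\pi_{a_i}(W)$. Therefore $\Psi(x) = (\vec c + \sum h_j \vec p^{(j)}) + \sum n_j \vec p^{(j)} = \vec c + \sum (h_j + n_j)\vec p^{(j)} \in \Psi(L)$, and since $\mathcal{C}(D)$ is commutative (proved in the preceding lemma) and $L = com(L)$, we get $x \in L$.

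For the reverse inclusion $L \subseteq \mathcal{C}(D)$: given $x \in L$ with $\Psi(x) = \vec c + \sum_{j} N_j \vec p^{(j)}$, $N_j \ge 0$, use Euclidean division $N_j = K_j \cdot \lfloor\text{something}\rfloor + h_j$ — more precisely, write each $N_j = n_j + h_j$ with $0 \le h_j < m/2$ and $n_j \ge 0$ chosen so that the ``periodic'' part $\sum n_j \vec p^{(j)}$ is realizable inside $\filling{R}_m$ and the ``constant-plus-remainder'' part $\vec c + \sum h_j \vec p^{(j)}$ lies in $\Psi(W)$ by Eq.~\eqref{eqW}. Then pick $v \in \Sigma^+$ with $\Psi(v) = \vec c + \sum h_j \vec p^{(j)}$; by construction of $Y \lhd W$ one can build a scaffold word $z \in Y \lhd W$ with $undot(z) = $ a word whose projections have the right lengths, namely $|\pi_{a_i}(\text{placed part})| = c_i + \sum_j h_j p^{(j)}_i$ and whose total projection lengths match those of $x$ (padding the $R_m(a_i)^*$ prefix with enough blocks so the dotted positions of $z$ can be filled). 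Construct $u \in \filling{R}_m$ with $\Psi(\pi_\Sigma(u)) = \sum n_j \vec p^{(j)}$ and matching $z$ position-by-position; by Lemma~\ref{lm-LinearCombinOfPeriods}(2), $u \in X^{n@}$ with $n = \sum n_j$. Then $z\, @\, u \in \Sigma^+$ has Parikh image $\Psi(x)$, and by commutativity of $\mathcal{C}(D)$, $x \in \mathcal{C}(D)$.

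The main obstacle I expect is the bookkeeping in the reverse inclusion: one must simultaneously (a) split each coefficient $N_j$ into a part $h_j < m/2$ absorbed into $W$ and a part $n_j$ realized by iterated matching of $X$, (b) arrange the lengths of the $R_m(a_i)^*$-prefixes in $z$ so that every dotted position is matched by an undotted letter placed by some word of $X^{n@}$ and vice versa (the ``restarting point'' alignment, already used in Theorem~\ref{th-join-concat}), and (c) verify that the resulting match is strong, i.e. leaves no dotted position unfilled. The freedom to take $m$ and $r$ arbitrarily large is what makes (a) always possible; the shiftability of $X$ (the fill) and the fact that projections commute are what make (b) work. Checking (c) is mostly a matter of counting: in each projection, the number of undotted letters supplied by $z$ plus those supplied by $u$ equals the projection length, which is forced by the Parikh arithmetic above.
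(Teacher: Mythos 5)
Your plan follows the same two-inclusion skeleton as the paper's proof, but it rests on a miscount of the blocks $R_m(a)$ that breaks the Parikh arithmetic. A block $\mathring a\, a^{r-1}\mathring a\, a^{m-r-1} \in R_m(a)$ has exactly $2$ dotted and $m-2$ undotted positions, so the scaffold's $Y$-part places a fraction $(m-2)/m$ of each projection's letters and leaves only a fraction $2/m$ to be placed by the fill; your claim that the $R_m(a_i)^*$ part ``contributes equal numbers of dotted and undotted letters'' (hence that $z$ places only the appended $W$-tail) is true only for $m=4$. In the inclusion $\mathcal{C}(D)\subseteq L$ this is recoverable --- correct accounting gives $\Psi(x)=\vec c+\sum_j\bigl(h_j+(m/2)\,n_j\bigr)\vec p^{(j)}$ rather than $\vec c+\sum_j(h_j+n_j)\vec p^{(j)}$, which is still in $\Psi(L)$ --- but in the reverse inclusion it is fatal as written: if you build $u\in\filling{R}_m$ with $\Psi(\pi_\Sigma(u))=\sum_j n_j\vec p^{(j)}$, $n_j=N_j-h_j$, and match it against a scaffold whose dotted positions it must fill exactly, that scaffold is forced to contribute an additional $\frac{m-2}{2}\sum_j n_j\vec p^{(j)}$ undotted letters, and the word you obtain has Parikh image $\vec c+\sum_j h_j\vec p^{(j)}+\frac{m}{2}\sum_j n_j\vec p^{(j)}\neq\Psi(x)$ in general. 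The correct bookkeeping (the paper's) is to set $h_j=N_j\bmod(m/2)$ and $d_j=N_j-h_j$, give the $Y$-part of the scaffold projection length $\sum_j d_j p^{(j)}_i$ on each $a_i$ (so scaffold and fill together realize the whole periodic part $\sum_j d_j\vec p^{(j)}$), and let the fill realize only $\Psi(\pi_\Sigma(u))=\frac{2}{m}\sum_j d_j\vec p^{(j)}$, decomposed by Lemma~\ref{lm-LinearCombinOfPeriods}(2) into $n=\sum_j 2d_j/m$ words of $X$.

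Two further points. First, you never invoke the standing hypothesis that every component of every period is even; it is exactly what guarantees that $\sum_j d_j p^{(j)}_i$ is a multiple of $m$ (since each $d_j$ is a multiple of $m/2$), so that a word of $(R_m(a_i))^*$ of that length exists --- without it the scaffold cannot be built. Second, the ``freedom to take $m$ and $r$ arbitrarily large'' plays no role in this theorem: the division of $N_j$ by $m/2$ works for every even $m\ge4$; that freedom is needed only later, to give the different COM-LIP components disjoint slot sets for joinability and concatenability.
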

\begin{proof}
Let $m,R, D,X,Y,W$ be
defined as in Def.~\ref{def-decomposedCOMLIP}, with $k = |\Sigma|, q = |\mathcal{P}|$. 
We first notice that, by definition of $Y \lhd W$ and of $X$:
\par\noindent
(*)\; if $z' \in Y$ then, for every $a_i \in \Sigma$, $ |z'|_{\widetilde a_i} $ is a multiple of $m$, 
$|z'|_{\mathring a_i}= 2\cdot |z|_{\widetilde a_i}/m$ and 
$|z'|_{a_i}= (m-2)\cdot  |z|_{\widetilde a_i}/m$.
\par\noindent
\textit{Proof of} 
$\mathcal{C}\left( D \right) \subseteq L $. 
Let $u \in  \mathcal{C}\left( D \right)$.  We show that $\Psi(u) \in \Psi(L)$.
Since $D$ is in decomposed form, $u$ must be the match of a word $z \in (Y\lhd W$) with $h\ge 0$ words 
 $x_1, \dots, x_h \in X$. Let $x = x_1@x_2 @\dots @ x_h$. Word $z$ has the form 
$z' \lhd w$ for some $z'\in Y$ and some $w \in W\subseteq \Sigma^*$. 
By Lm.~\ref{lm-LinearCombinOfPeriods}, Part (1), there exist $d_1, \dots , d_q \ge 0$ such that $\Psi(\pi_{\Sigma}(x)) = \vec{c}+ d_1 \cdot\vec{p}^{(1)}+\ldots +  d_q \cdot \vec{p}^{(q)}$.
 Also, by definition of $W$, there exist $q$ integers $0\le h_1, \dots , h_q <m/2$ such that 
$\Psi(w) = \vec{c} + h_1 \cdot\vec{p}^{(1)}\ldots +  h_q \cdot
\vec{p}^{(q)}$. 
Since $u = (z'\lhd w)@x$ is a strong match, $\Psi(u) = \Psi(\pi_\Sigma(z')) + \Psi(\pi_{\Sigma}(x)) + \Psi(\pi_{\Sigma}(w))$.
Notice that each component of  $\Psi(\pi_{\Sigma}(x))$ must be  even: 
by  $(z'\lhd w)@x$ being a strong match it follows that $|x|_{a_i}$ is equal to $|z'|_{\mathring a_i}$, which 
is even.
Again because $(z'\lhd w)@x$ is a strong match, $\Psi(\pi_\Sigma(z')) = (m-2)/2 \cdot \Psi(\pi_{\Sigma}(x))$.
Therefore: 
\[
\begin{array}{ll}
\Psi(u) &=  (m-2) \cdot \Psi(\pi_{\Sigma}(x))/2  + \Psi(\pi_{\Sigma}(x)) + \Psi(w) = \\
&= m\cdot \Psi(\pi_{\Sigma}(x)) + \Psi(w) = 
\\ &= m\cdot (d_1 \cdot\vec{p}^{(1)}+\ldots +  d_q \cdot
\vec{p}^{(q)}) + \vec{c} +  h_1 \cdot\vec{p}^{(1)}+\ldots +  h_q \cdot
\vec{p}^{(q)} = \\
&= \vec{c} + (m \cdot d_1 + h_1)\cdot\vec{p}^{(1)}+\ldots +  (m \cdot d_q + h_q)\cdot\vec{p}^{(q)} 
\end{array}
\]

Hence, $\Psi(u)\in\Psi(L)$.
\par\noindent
\textit{Proof of }
$L \subseteq 
\mathcal{C}\left( D \right)$. 
For all $u \in L$ 
there exist $q$ integers 
$n_1, \dots, n_q$  such that 
$\Psi(u)=
\vec{c} +n_1 \cdot\vec{p}^{(1)}+\ldots +  n_q \cdot
\vec{p}^{(q)}$.
For every $j$, $1\le j \le q$, let $h_j = n_j \text{ mod } (m/2)$. 
Let $d_j = n_j-h_j$ if $n_j p^{(j)}_i >0$, and $d_j = 0$ otherwise. Then, every $d_j$ and $h_j$ are such that $0\le h_j <m/2$ and $d_j$ is a (possibly zero) multiple of $m/2$.
By definition of $W$, there exists $w\in W$ such that $\Psi(w) = h_1 \cdot\vec{p}^{(1)}+\ldots +  h_q \cdot
\vec{p}^{(q)}$.
For all $a_i\in \Sigma$, let $z_i$ be the word in $(R_m(a_i))^*$ such that 
$|z_i| = d_1 p^{(1)}_i + \dots + d_q p^{(q)}_i$.
Such a word does exist, since each $d_j$ is a (possibly zero)  multiple of $m/2$, 
hence $d_1 p^{(1)}_i + \dots + d_q p^{(q)}_i$
is a multiple of $m/2$; 
if this multiple is 0, then $z_i=\epsilon$. By
definition of $R_m(a_i)$, word $z_i$ (when not empty) has, in every segment of length
$m$ belonging to $R_m(a_i)$,  exactly two occurrences of $\mathring a_i$, and $(m-2)$ occurrences 
of $a_i$. Hence, $|z_i|_{\mathring a_i} = 2(d_1 p^{(1)}_i + \dots + d_q p^{(q)}_i)/m$ and 
$|z_i|_{a_i} = (m-2)\cdot (d_1 p^{(1)}_i + \dots + d_q p^{(q)}_i)/m$. 
 \noindent We claim that there exists $z'\in Y$ such that 
$\Psi(undot(z')) =   d_1 \cdot\vec{p}^{(1)}+\ldots +  d_q \cdot
\vec{p}^{(q)}$. In fact, by Prop.~(*) above, there exists $z'\in Y$ such that 
$\pi_{\widetilde a_i}(z') = z_i$. Hence, $\Psi(\pi_\Sigma(z'))  = (m-2)\cdot (d_1 \cdot\vec{p}^{(1)}+\ldots +  d_q \cdot
\vec{p}^{(q)}$. 
By definition of $W$, there exists $w\in W$ such that 
\[\Psi(w) = \vec{c} + h_1 \cdot\vec{p}^{(1)}+\ldots +  h_q \cdot
\vec{p}^{(q)}.\] 
Let $z'' = \sw(z')$.  
By Lm.~\ref{lm-LinearCombinOfPeriods}, Part (2), there exist $n = 2d_1/m + 2d_2/m + \dots + 2d_q/m$ words $x_1, \dots, x_n \in 
X $ such that 
\[z''= x_1 @ \dots @ x_n, \text{ with } \Psi(\pi_\Sigma(z'')) =   
2\cdot (d_1 \cdot\vec{p}^{(1)}+\ldots +  d_q \cdot
\vec{p}^{(q)})/m.\] 
Consider now $x_i \lhd dot(w)$. This word is in $X$, since the fills included in $X$ may end with arbitrarily many $\mathring a$, for every $a \in \Sigma$. 
Clearly, from $x_i \lhd dot(w)$
one can obtain a strong match  $v$ with $z'\lhd w$: 
\begin{align*}
& v = (z'\lhd w) @ (x_1 \lhd dot(w)) @ \dots @ (x_n \lhd dot(w))\\
\text{with } &\Psi(v) =
\Psi(\pi_\Sigma(z')) + \Psi(\pi_\Sigma(z'')) + \Psi(\pi_\Sigma(w)) = 
 \Psi(u).
\end{align*}
 Since the language $\mathcal{C}(D)$ 
is commutative, and $v \in \mathcal{C}(D)$, also $u \in  \mathcal{C}(D)$.
\end{proof}
\par
\noindent We can now complete the proof of Th.~\ref{th-main}. 
Since a COM-SLIP language is the finite union of COM-LIP languages,  
 a \C language 
is the union and concatenation of COM-LIP languages. It can be assumed that these COM-LIP languages comply with Def.~\ref{def-decomposedCOMLIP} having only 
even components in every vector of the set $\mathcal{P}$ of periods (since otherwise they can be represented as the finite union of COM-LIP languages with this property). 
Select the same module and disjoint sets of slots   for the decomposed  bases of these COM-LIP languages. 
By Th.~\ref{th-join-concat}, since each COM-LIP is defined by a shiftable base with disjoint sets of slots, the various bases can be combined with $\cup$ and $\odot$, resulting in a shiftable base.  
By Th.~\ref{th-UnionDecomposForms} and
and Th.~\ref{th-ConcatDecomposForms}, 
the result is still a consensual language (with a decomposed base).
The inclusion is strict, since language $\{b a^1 b a^2 b a^3 \ldots  b a^{k}\mid k\geq 1\}$ has
a
non-SLIP commutative image, but it is in CREG~\cite{journals/ita/Crespi-ReghizziP11}.

\section{Related Work and Conclusion}\label{sectionRelWorkConcl}
By classical results, \C is included in the
class of languages recognized by  \emph{reversal-bounded} multi-counter machines~\cite{Baker1974,DBLP:journals/jacm/Ibarra78} (which is also closed under concatenation). 
The latter class  admits different, but  equivalent, 
 characterizations: as the class of languages recognized by 
(nondeterministic) \emph{blind MCMs'}~\cite{DBLP:journals/tcs/Greibach78a}, or as the minimal, intersection-closed full semi-AFL 
 including language $com((ab)^*)$~\cite{Baker1974,Greibach:1976:RCN}. However, the cited papers are not concerned with actual construction methods for the MCMs'.  
\par
 Although COM-SLIP languages have been much  
studied,  we are not aware of any specific study on
the effect on  COM-SLIP of  operations such as concatenation.
\par
Concerning the techniques to specify COM-SLIP   languages,  
 our   specification, using as patterns the commutative  Parikh vectors, bears some similarity to 
Kari's~\cite{Mateescu94}
``scattered deletion'' operation.
\par 
It is known that family COM-SLIP, when restricted  to a binary alphabet, is context-free~\cite{Latteux79,Rigo:2003}, therefore it enjoys
 closure under concatenation and star.  On the other hand, we observe that the intersection  
$I= L'^4\, \cap\, a^+ L'^2 b^+$, where $L'=com\left((ab)^+\right)$, is not context-free, since
\begin{center}
  $I\cap \left(a^+b^+\right)^4=\left\{a^n b^n  a^n b^n a^n b^n a^n b^n \mid n >1 \right\}$.
\end{center}
In~\cite{Rigo:2003},
the context-free grammar rules  for COM-LIP  again resemble our consensual specification. 
\par
Also, the   context-sensitive grammars in~\cite{jalc/Nagy09}, obtained by adding \emph{permutative} \emph{rules} of the form $AB \to BA$  to context-free grammars, 
include COM-SLIP and of course its closure by concatenation and star, but not its intersection with regular languages.
\par
Last, the COM-SLIP languages are included in the SLIP language family recognized by a formal device, based on so called restarting automata, studied in~\cite{jcss/NagyOttoO12}, but the grounds covered by   CREG  and by that family are quite different.
 Beyond the mentioned similarities,  we are unaware of anything related  to our congruence-based  decomposed form.
\paragraph{Unanswered questions}
This paper has added  a piece to our knowledge of the languages included in CREG; it has introduced a novel  compositional  construction for the union/concatenation, which is very general and hence likely to be useful  for other language subfamilies included in CREG.
 Some natural questions concern the closures of COM-SLIP under  other basic operations:  
is  the intersection of two COM-SLIP languages, or the Kleene star of a COM-SLIP language, in CREG?  
\par
A different kind of problem  is whether the only commutative languages that are in CREG are
semilinear; for instance,  the  nonsemilinear non-commutative  language $\{b a^1 b a^2 b a^3 \ldots  b a^{k}\mid k\geq 1\}$ is in CREG, but, for its commutative closure,  we do not know of a consensually regular specification.
Last, a more general problem is whether CREG is closed under union, concatenation, and star. A possible approach is to investigate whether every CREG language may be defined by a base which is joinable and shiftable, thus obtaining closure under union and concatenation by virtue of the lemmas presented in this paper. 
\bibliographystyle{eptcs}
\bibliography{automatabib}
\end{document}